\documentclass[10pt,a4paper]{article}

\usepackage[utf8]{inputenc}
\usepackage[T1]{fontenc}
\usepackage[english]{babel}
\usepackage{lmodern}
\usepackage{microtype}

\usepackage[margin=1in]{geometry}
\usepackage{hyperref}
\usepackage{amsmath, amsfonts, amssymb}
\usepackage{amsthm}
\usepackage{graphicx}
\usepackage{booktabs}
\usepackage{array}
\usepackage{braket}
\usepackage{listings}
\usepackage{xcolor}
\usepackage{enumitem}
\usepackage{caption}
\usepackage{subcaption}
\usepackage{multirow}
\usepackage{cite}
\usepackage[section]{placeins}

\usepackage[flushmargin]{footmisc}
\usepackage[section]{placeins}
\usepackage{float}    

\hypersetup{colorlinks=true, linkcolor=blue!50!black, urlcolor=blue!50!black,
            citecolor=blue!50!black}

\theoremstyle{plain}
\newtheorem{theorem}{Theorem}
\newtheorem{proposition}{Proposition}

\theoremstyle{definition}
\newtheorem{definition}{Definition}

\newtheorem{remark}{Remark}

\newcommand{\diamondnorm}[1]{\left\lVert #1 \right\rVert_\diamond}
\newcommand{\opnorm}[1]{\left\lVert #1 \right\rVert_\infty}
\newcommand{\trnorm}[1]{\left\lVert #1 \right\rVert_1}
\newcommand{\cE}{\mathcal{E}}
\newcommand{\cO}{\mathcal{O}}
\newcommand{\cH}{\mathcal{H}}
\newcommand{\cD}{\mathcal{D}}
\newcommand{\cS}{\mathcal{S}}
\newcommand{\cN}{\mathcal{N}}
\newcommand{\spec}{\mathrm{spec}}
\newcommand{\eps}{\varepsilon}

\definecolor{codebg}{rgb}{0.97,0.97,0.97}
\definecolor{codecomment}{rgb}{0.4,0.5,0.4}
\definecolor{codekw}{rgb}{0.0,0.0,0.6}
\lstdefinestyle{qcivet}{
  backgroundcolor=\color{codebg},
  commentstyle=\color{codecomment}\itshape,
  keywordstyle=\color{codekw}\bfseries,
  basicstyle=\ttfamily\footnotesize,
  breaklines=true,
  captionpos=b,
  keepspaces=true,
  showstringspaces=false,
  numbers=none,
  language=Python,
  frame=single,
  framesep=4pt,
  rulecolor=\color{black!20},
}
\title{QCIVET: A Quantum--Classical Pipeline Integrity\\
Framework with Contract-Based Subtype Verification\\
and Hash-Chained Audit Traces}
\author{%
  Esra Yeniaras\thanks{Corresponding author. Quantum Security and Post-Quantum Cryptography Researcher; previously Assistant Professor in Cyber Security at Copenhagen School of Business (EK -- Erhvervsakademi K{\o}benhavn). Email: esramath@gmail.com}
  \and
  Muhammad Amin Karimov\thanks{Department of Computer Engineering, Istanbul Gelisim University. Email: muhammad.amin@ogr.gelisim.edu.tr}%
}

\date{}
\usepackage{tikz}
\usetikzlibrary{positioning,shapes,arrows.meta,calc}
\usepackage{forest}
\usepackage{xcolor}
\begin{document}
\maketitle

\begin{abstract}
Hybrid quantum--classical pipelines now sit behind decisions
that range from drug-binding estimates to real-time fraud
alerts and customer-side auditing of cloud quantum processing
unit (QPU) services, but the integrity tooling we audit them
with is mostly borrowed from the classical world and therefore
blind to the quantum stages in the middle. We propose QCIVET,
a contract-based integrity-verification framework that treats
a hybrid pipeline as a sequence of stages, each carrying an
explicit specification, and audits the whole sequence in two
ways at once. Syntactic integrity is enforced by a
hash-chained audit trail with optional external anchoring;
semantic integrity is enforced at the quantum stages by a
calibrated observable-deviation test rooted in the
behavioural-subtyping discipline of Liskov and Wing. We prove
that the test is sound under the diamond-norm distance
between channels, conditionally complete when the observable
family is informationally complete (with an explicit constant
$C(\cO_A) = 2\sqrt{2}$ for single-qubit Pauli families), and
compositional under inheritance chains. We also single out a
class of ``Z-only-sneaky'' overrides that pass weak,
single-Pauli contracts but are immediately exposed by
multi-Pauli ones; this is a design lesson that follows from
our proofs. The framework is validated under the
calibration-derived noise models of two production IBM Quantum
processors (Eagle r3 and Heron r2), and the subtype-separation
protocol is run end-to-end on a real \texttt{ibm\_fez}
(Heron r2) processor, where the predicted sneaky fingerprint
survives intact ($1.401 \to 1.386 \to 1.420$ along the ideal
$\to$ simulated $\to$ real chain). The framework is
instantiated end-to-end on three pipelines: variational
quantum eigensolver (VQE) for early-stage drug discovery,
quantum-assisted fraud detection, and customer-side auditing
of a cloud QPU service. The reference implementation,
including a real-time engine whose median per-stage commit
latency is below 0.1 ms, is open source.

\medskip
\noindent\textbf{Keywords.} hybrid quantum-classical pipeline
integrity; contract-based subtyping; hash-chained audit trail;
real-time verification; variational quantum eigensolver (VQE);
quantum-assisted fraud detection; cloud QPU auditing;
NISQ-era (noisy intermediate-scale quantum) quantum software.
\end{abstract}

\section{Introduction}
\label{sec:intro}

Quantum computing is no longer a thing one demonstrates in a
single notebook. Pharmaceutical companies use the variational
quantum eigensolver (VQE) for early-stage drug-binding studies;
financial institutions experiment with quantum kernels for
fraud detection and quantum amplitude estimation for
derivatives risk; cloud providers sell access to processors
exceeding one hundred qubits. In every one of these settings
the quantum subroutine is embedded in a longer hybrid pipeline,
and \emph{the result of the quantum stage feeds a classical
decision that may be hard to reverse}. A miscalibrated VQE
energy can mislead a clinical trial; a poisoned kernel matrix
can suppress a fraud alert; a silently re-routed cloud job can
return results from a backend the customer never agreed to use.

The trouble is that the integrity tooling deployed in classical
software supply chains was never designed for these
pipelines. Frameworks such as in-toto~\cite{intoto}, the Supply-chain
Levels for Software Artifacts (SLSA)~\cite{slsa}, and
Sigstore~\cite{sigstore} attest that an artefact \emph{was
built by a given recipe}: they bind a cryptographic hash to a
declared build process. They do not
attest that the artefact, when run on a noisy quantum
processor, \emph{preserved a stated semantic contract} under
realistic device noise. Recent ``quantum-resistant integrity''
work~\cite{ai_supply_chain_pqc, quantum_resistant_trusted}
re-engineers the cryptographic primitives so that classical
attestations stay forge-resistant against future quantum
adversaries; this is orthogonal, since the workflow itself is
still treated as classical.

\paragraph{The gap.}None of the existing tools resolve the three scenarios above
(a miscalibrated VQE energy, a poisoned kernel matrix, a
silently rerouted cloud job) within a single framework.
Classical supply-chain tools such as in-toto, SLSA, and
Sigstore answer one question well: ``was this artefact built
from the declared recipe?''. They take a hash of the build
process and confirm that nobody quietly swapped a step or a
parameter along the way. This is structural integrity: the
audit log itself has not been tampered with. What these tools
cannot answer is a different and equally important question:
``did the quantum stage actually behave the way it was
supposed to behave on the hardware that ran it?''. A circuit
can be transpiled and submitted exactly as recorded, with
every hash matching, and still produce a result that is too
far from the contract the developer wrote down, because the
qubits drifted, or because a sneaky variant of the circuit was
substituted that happens to look correct in one specific
measurement direction but misbehaves everywhere else. This is
behavioural integrity: the run actually delivered what the
contract promised. The reverse failure mode is just as real.
A pipeline whose quantum stage was well-behaved is also broken
if the audit log was tampered with after the fact, since later
auditors no longer know what was actually run. Quantum-resistant
signing schemes harden the structural side against future
cryptographic attacks but leave the behavioural question
untouched. What is missing, and what QCIVET provides, is a
single discipline that audits both kinds of integrity at once:
a hash-chained audit trail that survives tampering plus an
observable-deviation check that the quantum output sits inside
a calibrated tolerance of the contract a developer specified,
with a formal soundness and completeness link between the two.

\paragraph{Contributions.} This paper makes the following
contributions.
\begin{enumerate}
\item We identify and formalise the integrity-verification
problem for hybrid quantum--classical pipelines as distinct
from both classical supply-chain integrity and quantum-resistant
signing of classical pipelines (Section~\ref{sec:related}).
\item We develop QCIVET: a contract-based subtyping discipline
in which each stage carries an explicit spec, a hash-chained
audit trail in the spirit of secure audit
logs~\cite{schneier_kelsey} and Merkle commitments~\cite{merkle},
an optional external anchor for global-rewrite detection, and
a real-time engine that streams stage commits
(Sections~\ref{sec:bg} and~\ref{sec:framework}).
\item We prove three formal properties: \emph{soundness}
(channel-level closeness implies observable-level closeness
via H{\"o}lder's inequality), \emph{conditional completeness}
(observable-level closeness implies channel-level closeness
when the observable family is informationally complete, with
$C(\cO_A) = 2\sqrt{2}$ for the single-qubit Pauli family), and
\emph{compositionality} along inheritance chains. We also
characterise the family of ``sneaky overrides'' that pass
weak contracts but fail informationally complete ones
(Section~\ref{sec:formal}).
\item We validate the framework under realistic device noise
via the calibrated noise models of two production IBM Quantum
processors, and confirm the separation of valid, invalid, and
sneaky subtypes on a real \texttt{ibm\_fez} (Heron r2)
processor, where the predicted sneaky fingerprint survives
intact ($1.401 \to 1.386 \to 1.420$ along the ideal
$\to$ simulated $\to$ real chain) (Section~\ref{sec:exp}).
\item We instantiate the framework end-to-end in three domains:
VQE-driven drug discovery, quantum-assisted fraud detection,
and cloud QPU auditing (Section~\ref{sec:rt}). For each, we
exhibit a six-stage pipeline, a threat model, and four scenarios
(clean, local tampering, semantic drift, global rewrite); the
engine catches every attack at the appropriate scope.
\end{enumerate}

\paragraph{Software availability and reproducibility.}
The reference implementation, the simulation scripts, the
device-noise validation, and the three end-to-end demos are
released as open-source software under the MIT license at the
project repository
\url{https://github.com/schrodinket/QCIVET}.\footnote{\label{fn:repo}%
\url{https://github.com/schrodinket/QCIVET}}
The repository contains the verification engine
(\texttt{qcivet\_realtime.py}), the experimental scripts that
reproduce every figure and table in this paper, a real-hardware
validation script (\texttt{quantum\_oop\_\allowbreak real\_qpu.py}) that
runs the subtype-separation protocol on an IBM quantum
processor accessed through the IBM Quantum cloud, three
application demonstrators (VQE, fraud detection, cloud QPU
auditing), and a hash-chain attack-scenario prototype. All
simulation-based experiments run on commodity hardware in a
few minutes; the device-noise experiments additionally require
a free IBM Quantum account, and the real-hardware validation
requires submitting jobs to the IBM Quantum cloud. Each figure
and table caption in the body of the paper points to the
specific script (and, where applicable, function) that
generates it.

\paragraph{Organization of this paper.} 
Section~\ref{sec:related} surveys the related literature and
positions QCIVET. Section~\ref{sec:bg} recalls the contract
and behavioural-subtyping vocabulary and the quantum-channel
formalism used throughout. Section~\ref{sec:framework}
introduces the QCIVET subtyping discipline, and
Section~\ref{sec:formal} proves its soundness, conditional
completeness, and compositionality, including a formal
characterisation of sneaky overrides. Section~\ref{sec:exp}
reports the experimental validation under both simulated and
real IBM hardware (\texttt{ibm\_fez}). Section~\ref{sec:rt}
presents the real-time engine and instantiates the framework
on three application domains: VQE-driven drug discovery,
quantum-assisted fraud detection, and customer-side cloud QPU
auditing. Section~\ref{sec:threat} fixes the threat model and
discusses security guarantees. Section~\ref{sec:conc}
concludes and outlines directions for future work.

\section{Related Work and Positioning}
\label{sec:related}

QCIVET sits at the intersection of four research threads:
classical software supply-chain integrity, post-quantum
cryptographic protections of classical workflows,
quantum-software contracts and refinement, and trust
mechanisms for cloud quantum platforms. We surveyed each
thread systematically, including a structured literature
review using a 200M-paper academic search engine to confirm
the absence of prior unified frameworks (see
Section~\ref{sec:related:summary}). To our knowledge, no
prior work integrates hash-chain integrity, behavioural
observable contracts, behavioural subtyping for quantum
channels, and the multi-stage hybrid pipeline shape into a
single framework with formal soundness, completeness, and
compositionality guarantees.

\subsection{Classical Software Supply-Chain Integrity}
\label{sec:related:classical}

The state of the art for classical pipelines is built around
attested provenance. in-toto~\cite{intoto} formalises a
``farm-to-table'' chain of attestations linking sources to
artefacts; SLSA~\cite{slsa} layers a
maturity model on top, with active analysis of adoption
challenges~\cite{patel_slsa_2025}; Sigstore~\cite{sigstore} offers
keyless signing and a public transparency log. Recent
work~\cite{hassanshahi_supply_chain} systematises provenance,
tamper resistance, and build integrity practices across this
ecosystem. Closely related, hash-chain-based integrity
verification has been used in adjacent
domains~\cite{kim_blockchain_hash_chain,merkle,schneier_kelsey},
notably blockchain transaction auditing, and QCIVET inherits
the chain-of-commitments idea from this line.

These tools assume that the artefact under attestation is a
binary, image, or package whose integrity is fully captured by
a cryptographic hash of its bytes. This assumption holds for
classical software but breaks for the quantum stages of a
hybrid pipeline in two distinct ways. First, hardware noise:
the same quantum circuit, when executed on a real QPU, can
produce different outputs depending on the calibration drift
and the prevailing single- and two-qubit error rates. Second,
semantic substitution: a byte-level-conformant ``sneaky''
circuit (formally characterised in
Section~\ref{sec:formal}) may pass a weak observable contract
while violating an informationally complete one. Two artefacts
with identical hashes can therefore exhibit very different
quantum behaviours. QCIVET inherits the hash-chain and
external-anchor pattern from this literature (Sigstore Rekor
and RFC 3161~\cite{rfc3161} timestamping authorities are
concrete instantiations of our \texttt{ExternalAnchor}
interface) and extends it with an observable-deviation check
at each quantum stage. Table~\ref{tab:rw:classical} contrasts
these classical supply-chain tools with QCIVET on three
dimensions: hash-chain integrity, external anchoring, and
quantum-stage scope.

\begin{table}[h]
\centering
\footnotesize
\caption{Classical supply-chain integrity versus QCIVET.}
\label{tab:rw:classical}
\begin{tabular}{|p{0.30\linewidth}|p{0.30\linewidth}|p{0.30\linewidth}|}
\hline
\textbf{Classical tools} & \textbf{What they verify} & \textbf{QCIVET adds} \\
in-toto~\cite{intoto}, SLSA~\cite{slsa}, Sigstore~\cite{sigstore} & & (this work) \\
\hline
Hash chain of build steps     & Bytes match the recipe (no quantum awareness) & Same hash chain, kept for hybrid quantum-classical pipelines \\
\hline
External anchor (Sigstore Rekor, RFC 3161~\cite{rfc3161}) & Log not rewritten (classical artefacts only) & Same anchor, kept \\
\hline
No quantum stage in scope     & Assumes bytes equal $=$ behaviour equal       & Observable check at every quantum stage $\star$ \\
\hline
\end{tabular}
\end{table}
\subsection{Quantum-Resistant Integrity for Classical Workflows}
\label{sec:related:pqc}

A growing body of work studies how classical signing and
attestation should evolve when quantum adversaries become
practical \cite{NISTPQCProcess,NISTAdditionalPQCSignatures, Yeniaras2026,YeniarasFaster,YeniarasImproved}. MBOM-PQC (Model Bill of Materials with Post-Quantum
Cryptography)~\cite{ai_supply_chain_pqc} proposes
post-quantum-safe signing for AI/ML model lineage;
\cite{quantum_resistant_trusted} integrates ML-DSA
(Module-Lattice-Based Digital Signature Algorithm) into a
Trusted Platform Module (TPM)-based remote-attestation flow;
the broader transition to NIST post-quantum cryptography (PQC)
algorithms~\cite{nist2024fips203,nist2024fips204,nist2024fips205}
touches all of these. None of them changes the assumption that
the pipeline itself is classical: the cryptographic primitives
change, the workflow they protect does not. QCIVET is
orthogonal: we keep classical SHA-256 in our chain (a deployer
is free to substitute SHA-3 or a PQC-safe message authentication
code (MAC)) and instead
change the pipeline shape, admitting quantum stages and giving
them a semantic contract. The two approaches compose readily;
a deployment could combine them by signing each anchor
commitment with an ML-DSA signature. Table~\ref{tab:rw:pqc}
situates QCIVET against classical (RSA, ECDSA) and PQ-hardened
(MBOM-PQC, ML-DSA) signing approaches.

\begin{table}[h]
\centering
\footnotesize
\caption{Quantum-resistant signing versus QCIVET.}
\label{tab:rw:pqc}
\begin{tabular}{|p{0.30\linewidth}|p{0.30\linewidth}|p{0.30\linewidth}|}
\hline
\textbf{Classical approach} & \textbf{PQ-hardened approach} & \textbf{QCIVET} \\
(RSA, ECDSA) & MBOM-PQC~\cite{ai_supply_chain_pqc}, ML-DSA + TPM~\cite{quantum_resistant_trusted} & (this work) \\
\hline
RSA signature on classical pipeline   & ML-DSA signature on classical pipeline      & SHA-256 (kept) on hybrid quantum-classical pipeline \\
\hline
No quantum stages                     & No quantum stages                            & Quantum stages with semantic contracts $\star$ \\
\hline
Byte-hash only                        & Byte-hash only                               & Byte-hash $+$ quantum observable-deviation check $\star$ \\
\hline
\end{tabular}

\vspace{0.5em}
\begin{minipage}{0.95\linewidth}
\scriptsize
\noindent\textit{Note.} ML-DSA is the lattice-based signature standardised by NIST FIPS~204~\cite{nist2024fips204}; the wider NIST post-quantum suite also includes ML-KEM (FIPS~203,~\cite{nist2024fips203}) for key encapsulation and SLH-DSA (FIPS~205,~\cite{nist2024fips205}) as a hash-based signature alternative. Any of these can be substituted for ML-DSA in a PQ-hardened pipeline; QCIVET is orthogonal to that choice.
\end{minipage}
\end{table}

\subsection{Quantum Cryptography and Quantum-Augmented Integrity}
\label{sec:related:qcrypto}

A parallel stream uses quantum mechanics as a cryptographic
primitive: BB84-derived hybrid encryption~\cite{mozo_bb84_aes},
where BB84 is a quantum key distribution (QKD) protocol;
quantum hash functions (QHFs) and quantum-walk-based
primitives~\cite{wang_qhf_blockchain}, with hybrid hash
frameworks targeting post-quantum security~\cite{bhadane2025hybrid}. Here the
quantum component is the security mechanism; the workflow it
protects is once again classical. QCIVET inverts the picture:
the workflow is quantum, the security mechanism is classical.
These approaches are composable: a deployer may pair QCIVET's 
hash-chain anchors with a quantum-walk hash function, or wrap 
each anchor commitment with a PQC signature for non-repudiation.
Table~\ref{tab:rw:qcrypto} highlights this contrast: QCIVET
treats the quantum component as the workload to be audited
rather than as a cryptographic primitive.

\begin{table}[h]
\centering
\footnotesize
\caption{Quantum cryptography and quantum-augmented integrity versus QCIVET.}
\label{tab:rw:qcrypto}
\begin{tabular}{|p{0.30\linewidth}|p{0.30\linewidth}|p{0.30\linewidth}|}
\hline
                  & \textbf{Quantum-augmented integrity} & \textbf{QCIVET} \\
                  & BB84 / QKD~\cite{mozo_bb84_aes}, quantum hash~\cite{wang_qhf_blockchain} & (this work) \\
\hline
Workflow type     & Classical pipeline & Hybrid quantum-classical pipeline $\star$ \\
\hline
Security mechanism & Quantum (BB84 / QKD, quantum hash) & Classical (SHA-256, hash chain) \\
\hline
Quantum used for  & Building security primitives & Computation that must be audited $\star$ \\
\hline
Composability     & --- & Composes with PQC signatures and quantum-walk hashes \\
\hline
\end{tabular}
\end{table}
\FloatBarrier

\subsection{Quantum Software Contracts and Refinement}
\label{sec:related:qse}

A small but growing literature treats correctness of quantum
software at the specification level. Three families are
relevant: design-by-contract for individual circuits or
modules, Hoare logic and refinement calculi for whole
programs, and compiler verification. Figure~\ref{fig:rw_landscape}
summarises the landscape and indicates where QCIVET sits
relative to it. The three established families operate at the
granularity of a single circuit, a single program, or a
single compilation pass, and they target either functional
correctness or program development. QCIVET addresses a
different layer: runtime integrity of a multi-stage hybrid
quantum-classical pipeline against an adversary, using
operationally measurable Pauli observables and a hash-chained
audit trail. We expand on each family in the following
paragraphs.

\paragraph{Design-by-contract for quantum software.}
The closest prior work is the design-by-contract framework of
Yamaguchi and Yoshioka~\cite{yamaguchi_dbc_quantum}, a
Python-embedded language that lets a programmer attach
pre/post-state assertions and assertions over the statistical
processing of measurement results to individual quantum
circuits. ScaffML~\cite{jin_zhao_scaffml} provides analogous
pre- and post-conditions at the Scaffold module level. QCIVET
differs from both along three substantive axes: (i) we operate
at the \emph{pipeline} granularity, treating each quantum
stage as one node in a hash-chained multi-stage workflow,
rather than asserting on a single circuit or module; (ii) we
provide a behavioural-subtyping foundation with the
sneaky-subtype impossibility result
(Proposition~\ref{prop:sneaky}) which is absent from these
state- and module-equality frameworks; (iii) we integrate
observable contracts with cryptographic hash-chain integrity,
giving simultaneous coverage of in-flight semantic drift and
post-hoc audit-trail tampering, neither of which intra-circuit
or intra-module assertions address.

\begin{figure}[h]
\centering
\scriptsize
\begin{forest}
  for tree={
    grow=east,
    parent anchor=east,
    child anchor=west,
    edge={draw=gray!60, thick},
    s sep=1pt,
    l sep=8pt,
    inner sep=4pt,
    rounded corners=2pt,
    draw=blue!40,
    fill=blue!5,
    font=\sffamily\scriptsize,
    anchor=west,
    align=left,
  }
  [Quantum Software Contracts, draw=blue!70, fill=blue!15, font=\sffamily\scriptsize\bfseries
    [Yamaguchi (single circuit), fill=cyan!8, draw=cyan!50]
    [ScaffML (single module), fill=cyan!8, draw=cyan!50]
    [Quantum Hoare Logic (theory), draw=teal!60, fill=teal!10, font=\sffamily\scriptsize\bfseries
      [aQHL (projections), fill=teal!5, draw=teal!40]
      [qRHL (relational), fill=teal!5, draw=teal!40]
      [Feng-Zhou (refinement orders), fill=teal!5, draw=teal!40]
    ]
    [Compiler / Type theory, draw=violet!50, fill=violet!10, font=\sffamily\scriptsize\bfseries
      [CertiQ (compiler verification), fill=violet!5, draw=violet!30]
      [FJQuantum (type theory), fill=violet!5, draw=violet!30]
      [Process algebras, fill=violet!5, draw=violet!30]
    ]
  ]
\end{forest}

\vspace{0.4em}
\centerline{\rule{0.4\linewidth}{0.3pt}}
\vspace{0.4em}

\begin{forest}
  for tree={
    grow=east,
    parent anchor=east,
    child anchor=west,
    edge={draw=orange!60, thick},
    s sep=1pt,
    l sep=8pt,
    inner sep=4pt,
    rounded corners=2pt,
    draw=orange!50,
    fill=orange!8,
    font=\sffamily\scriptsize,
    anchor=west,
    align=left,
  }
  [{\bfseries QCIVET}: Pipeline integrity (multi-stage hybrid + adversary), draw=orange!80, fill=orange!20, font=\sffamily\scriptsize\bfseries, line width=0.8pt
    [Pauli observables (operationally measurable)]
    [Behavioural subtyping (Liskov-Wing)]
    [Sneaky impossibility (Proposition~\ref{prop:sneaky})]
    [Hash-chain integrity (cryptographic)]
  ]
\end{forest}
\caption{Landscape of quantum software contract frameworks
and where QCIVET sits. The three families above (blue/teal/violet)
operate at the level of a single circuit, single program, or
compile time. QCIVET (orange) targets a different layer:
runtime integrity of a multi-stage hybrid quantum-classical
pipeline under an adversarial threat model.}
\label{fig:rw_landscape}
\end{figure}
\FloatBarrier

\paragraph{Quantum Hoare logic and refinement calculi.}
The semantic backbone for observable specifications is quantum
Hoare logic (qHL), in which observables play the role of
predicates that are pushed back through the program by a
weakest-precondition transformer~\cite{ying_quantum_hoare}. Applied quantum Hoare
logic (aQHL)~\cite{aqhl} restricts pre- and post-conditions to
projections, simplifying verification while admitting rules for
robustness and output error bounds; quantum relational Hoare
logic (qRHL)~\cite{qrhl} formulates relational invariants
between two programs. Most directly comparable to our
substitutability test, Feng and
Zhou~\cite{feng_zhou_refinement_orders} (and the related
TOSEM calculus of
Feng-Zhou-Xu-Xu~\cite{feng_zhou_subtyping}) provide the first
comprehensive study of refinement orders for quantum programs
under projector-based, effect-based, and set-of-effects-based
specifications, characterising when one quantum program can
replace another. Their semantic, language-independent
treatment, together with the order-theoretic characterisations
in terms of complete positivity and the Smyth/Hoare orders,
establishes the foundational backbone for substitutability in
the quantum setting, and is the natural denotational
counterpart to our operational, runtime-measurable contract.
QCIVET extends this foundation along three complementary axes:
(i) where Feng-Zhou focus on \emph{program development}
(stepwise refinement of a specification toward an
implementation), QCIVET addresses the runtime integrity of an
already-implemented hybrid pipeline against an adversary;
(ii) where Feng-Zhou take projectors and effects as predicates,
which is the right level of generality for development-time
reasoning, QCIVET projects onto operationally measurable Pauli
expectation values with a calibrated tolerance budget, which a
deployer can compute directly from a finite shot count on real
hardware; (iii) where Feng-Zhou prove order-theoretic
correspondences, QCIVET proves runtime soundness and
completeness up to a noise floor, integrated with a hash-chain
audit trail. The two frameworks are intentionally
complementary: the refinement orders
of~\cite{feng_zhou_refinement_orders,feng_zhou_subtyping}
provide the denotational semantics that underwrites our
subtype obligations, while QCIVET supplies the operational,
hardware-validated projection of that semantics needed for
runtime audit on noisy hybrid pipelines. We see our work as
the audit-time companion to the refinement-time framework of
Feng-Zhou. We also note Li et al.'s
projection-based runtime assertions~\cite{li_proq}, an early
runtime-monitoring approach for quantum programs that we
extend by tying assertions to a hash-chain audit trail and a
hybrid pipeline.

\paragraph{Quantum compiler verification and language
foundations.}
CertiQ~\cite{shi_certiq} verifies properties of a realistic
quantum compiler with a contract-based methodology. While
related in spirit, the compiler-verification problem is
disjoint from pipeline integrity: CertiQ guarantees that the
compiler produces semantically equivalent circuits, whereas
QCIVET guarantees that runtime executions of a (possibly
adversarially modified) compiled pipeline satisfy a calibrated
contract. Object-calculus work such as FJQuantum (a Featherweight Java
calculus extended with quantum primitives)~\cite{fjquantum}
and quantum process algebras~\cite{ying_quantum_processes}
address typing and equivalence and are complementary to our
runtime concerns.
General quantum software engineering surveys map the broader
landscape~\cite{dwivedi_qse_survey, qse_lifecycle}.
Table~\ref{tab:rw:qse} summarises how QCIVET differs from the
closest quantum software contract frameworks
(Yamaguchi-Yoshioka, ScaffML, Feng-Zhou).

\begin{table}[h]
\centering
\footnotesize
\caption{Quantum software contract frameworks versus QCIVET.}
\label{tab:rw:qse}
\begin{tabular}{|p{0.30\linewidth}|p{0.30\linewidth}|p{0.30\linewidth}|}
\hline
\textbf{Yamaguchi-Yoshioka, ScaffML} & \textbf{Feng-Zhou refinement} & \textbf{QCIVET} \\
\cite{yamaguchi_dbc_quantum,jin_zhao_scaffml} (design-by-contract) & \cite{feng_zhou_refinement_orders,feng_zhou_subtyping} (refinement orders) & (this work) \\
\hline
Single quantum circuit or module & Single quantum program & Multi-stage hybrid quantum-classical pipeline $\star$ \\
\hline
State or measurement assertions & Projector or effect predicates & Observable-deviation contract on quantum output $\star$ \\
\hline
No hash-chain audit trail & No hash-chain audit trail & Hash-chain audit trail with external anchor \\
\hline
No sneaky-subtype result & Refinement, not adversarial & Sneaky-subtype impossibility result for quantum stages $\star$ \\
\hline
\end{tabular}
\end{table}
\FloatBarrier

\subsection{Cloud QPU Trust}
\label{sec:related:cloud}

Customer-side trust in cloud quantum services has emerged as
an active subfield. Four complementary approaches deserve
explicit comparison with the cloud-auditing demonstrator of
Section~\ref{sec:apps:cloud}.

\paragraph{Device fingerprinting.}
Wu et al.~\cite{wu_dynamic_fingerprinting,
wu_dynamic_fingerprinting_extended} authenticate a quantum
device by probing it with calibration circuits and comparing
the device-side error fingerprint against a user-side
expectation, detecting machine substitution and profile
fabrication attacks. The approach is empirically validated on
seven IBM devices. Fingerprinting authenticates the
\emph{device}; QCIVET authenticates the \emph{result}. A device
might pass the fingerprint test (correct hardware) yet still
return a result outside the customer's contract, for
example, because the upstream specification was tampered with,
because middleware re-routed circuits, or because the
calibration drifted between the fingerprint and the workload.
The two are complementary and could be deployed together.

\begin{table}[h]
\centering
\scriptsize
\caption{Cloud QPU trust mechanisms versus QCIVET.}
\label{tab:rw:cloud}
\begin{tabular}{|p{0.18\linewidth}|p{0.16\linewidth}|p{0.16\linewidth}|p{0.16\linewidth}|p{0.20\linewidth}|}
\hline
                  & \textbf{Device fingerprinting} & \textbf{Distributed shots} & \textbf{Quantum PUF} & \textbf{QCIVET} \\
                  & Wu et al.~\cite{wu_dynamic_fingerprinting,wu_dynamic_fingerprinting_extended} & Upadhyay-Ghosh~\cite{upadhyay_ghosh_2022,upadhyay_ghosh_2024} & Phalak et al.~\cite{phalak_quantum_puf} & (this work) \\
\hline
Authenticates     & The quantum device & Hardware via voting & Hardware identity & The quantum result against a contract $\star$ \\
\hline
Devices needed    & One quantum device & Multiple quantum devices & One quantum device & One quantum device \\
\hline
Detects spec or middleware tampering & No & No & No & Yes $\star$ \\
\hline
Composes with QCIVET & Yes & Yes & Yes & --- \\
\hline
\end{tabular}

\vspace{0.5em}
\begin{minipage}{0.95\linewidth}
\scriptsize
\noindent\textit{Note.} A complementary cryptographic-protocol line (Leichtle et al.~\cite{leichtle_verification}) provides composable verification with statistical guarantees, but typically requires trap-qubit constructions on measurement-based graphs; QCIVET targets the same threat surface with a single device and a tracer observable.
\end{minipage}
\end{table}
\FloatBarrier

\paragraph{Distributed shot allocation.}
Upadhyay and Ghosh~\cite{upadhyay_ghosh_2022,
upadhyay_ghosh_2024} distribute repeated executions across
multiple hardware options and use majority voting (or adaptive
allocation) to detect tampered hardware. This requires
multiple physical devices and is most effective for pure
quantum workloads. QCIVET works on a single device by checking
against a calibrated observable tolerance, requires no
replication, and detects tampering not only of hardware but
also of specifications and pipeline middleware. The four
mechanisms discussed in this section are summarised in
Table~\ref{tab:rw:cloud}.

\paragraph{Quantum physically unclonable functions.}
Phalak et al.~\cite{phalak_quantum_puf} propose quantum
physically unclonable functions (PUFs)
for hardware authentication, achieving strong inter-device
Hamming-distance separation. PUFs authenticate the
\emph{identity} of a hardware unit; QCIVET authenticates the
\emph{integrity of a result} against a contract. The two
operate at orthogonal layers and compose without conflict.

\paragraph{Cryptographic delegation protocols.}
A cryptographic-protocol line, exemplified by Leichtle et
al.~\cite{leichtle_verification}, develops composable
verification protocols that interleave computation rounds with
test rounds for malicious-behaviour detection. Such protocols
offer formal, statistical guarantees against arbitrary malicious
servers but typically require trap-qubit constructions on
measurement-based quantum-computation graphs and incur
significant per-circuit overhead. QCIVET targets the same threat
surface at a lower formal level but with operational practicality:
the customer needs only to choose a tracer observable and a
tolerance budget, with no extra cryptographic machinery on
the quantum side.

\subsection{Hybrid Quantum-Classical Applications and Audit Trails}
\label{sec:related:apps}

The three application domains in Section~\ref{sec:apps}
intersect with separate application-specific literatures.

\paragraph{VQE for chemistry and materials.}
The variational quantum eigensolver has been studied
extensively as an algorithm
~\cite{fedorov_vqe_review,tilly_vqe_review,harville_vqe_recent},
with attention to ansatz design, optimiser robustness, and
device-noise mitigation. To our knowledge, no prior work
proposes integrity or audit-trail mechanisms specifically
designed for VQE workflows in pharmaceutical
applications, even though general pharmaceutical audit-trail
standards exist independently~\cite{saxena_pharma_audit,fda_21cfr11}. Our
VQE drug-discovery demonstrator (Section~\ref{sec:apps:vqe})
is the first such mechanism we are aware of.

\paragraph{Hybrid quantum-classical financial systems.}
Concurrent work on HQFS (Hybrid Quantum-Classical Financial
System)~\cite{nayak_hqfs} proposes a hybrid
quantum-classical financial system that combines VQE-based
forecasting with post-quantum cryptographic signing of
allocation records and an audit trail linking decisions to
model state and inputs. QCIVET's fraud-detection demonstrator
(Section~\ref{sec:apps:fraud}) addresses an adjacent setting
but with three orthogonal differences: (i) HQFS uses
post-quantum \emph{signatures} to authenticate records after
the fact, while QCIVET uses a hash-chain \emph{integrity
verifier} that detects tampering at commit time \emph{and}
via post-pipeline replay, including the case in which an
attacker has obtained the signing keys; (ii) HQFS provides no
semantic-level check on the quantum stage's output, whereas
QCIVET enforces a calibrated observable contract; (iii) HQFS
does not address a behavioural-subtyping discipline, leaving
the system blind to the sneaky-subtype attack pattern of
Proposition~\ref{prop:sneaky}. The two approaches are
complementary: a deployer could combine HQFS-style
post-quantum signatures (for non-repudiation) with QCIVET's
observable contracts and hash-chain integrity (for tamper
detection and semantic checks).
\begin{table}[H]
\centering
\footnotesize
\caption{HQFS hybrid quantum-classical financial system versus QCIVET fraud-detection demonstrator.}
\label{tab:rw:apps}
\begin{tabular}{|p{0.30\linewidth}|p{0.30\linewidth}|p{0.30\linewidth}|}
\hline
                  & \textbf{HQFS} & \textbf{QCIVET} \\
                  & Nayak et al.~\cite{nayak_hqfs} & (fraud-detection demo) \\
\hline
Tamper detection  & Post-quantum signatures after the fact & Hash-chain integrity at commit time and on replay $\star$ \\
\hline
Quantum-stage semantic check & None (signature only) & Calibrated observable contract on quantum output $\star$ \\
\hline
Sneaky-subtype coverage & None & Behavioural-subtyping discipline detects sneaky overrides $\star$ \\
\hline
Composability & --- & Composes with HQFS-style post-quantum signatures \\
\hline
\end{tabular}
\end{table}

\paragraph{Hybrid security via quantum machine learning.}
A separate strand uses hybrid quantum-classical machine
learning (QML) for security tasks such as threat
detection~\cite{evans_hybrid_threat,passo_hybrid_security,
akter_quantum_ml_security}. This direction treats the quantum
component as a tool inside a security application, whereas
QCIVET treats the quantum component as an asset to be
\emph{protected}: the pipeline that contains the QML
classifier is what we audit. Surveys of quantum software
engineering, including requirements
engineering~\cite{sepulveda_quantum_se}, catalogue practices
for hybrid systems but do not address pipeline integrity
specifically. Table~\ref{tab:rw:apps} contrasts HQFS with
QCIVET's fraud-detection demonstrator, highlighting the
orthogonal choices on tamper detection, semantic checks, and
sneaky-subtype coverage.

\subsection{Behavioural Subtyping in Classical Software}
\label{sec:related:subtyping}

QCIVET's behavioural-subtyping discipline traces back to
Liskov and Wing's classical
formulation~\cite{liskov_wing} and Meyer's
design-by-contract~\cite{meyer_oosc}; subsequent work has
extended the idea to product-line settings via
feature-oriented contracts~\cite{thum_feature_oriented_contracts}
and to runtime monitoring. QCIVET adopts a complementary
engineering vocabulary to the substantial recent progress on
quantum substitutability via \emph{refinement} and
\emph{compliance}, notably
Feng-Zhou~\cite{feng_zhou_refinement_orders,feng_zhou_subtyping}:
we cast the same substitutability question in the Liskov-Wing
register of behavioural subtyping, which gives us an intuitive
engineering interpretation (``$B$ is a behavioural subtype of
$A$ if $B$ can replace $A$ without observable surprise to a
contract-respecting client'') and a clean operational reading
of the sneaky-subtype phenomenon
(Proposition~\ref{prop:sneaky}). The two vocabularies are
translatable: every result we prove can be interpreted within the
refinement-order backbone of Feng-Zhou, and conversely, our
observable-deviation contract provides a runtime-measurable
instance of their refinement relation.

\subsection{Summary and Positioning}
\label{sec:related:summary}

To validate the absence of prior unified frameworks we
conducted a systematic search using Elicit AI, a
literature-search tool indexing over 200~million academic
papers. Across five queries spanning the four threads above,
we recorded three explicit ``no prior work'' findings: (i)
no source proposes a hash-chain-based integrity verification
framework for hybrid quantum-classical software pipelines
that also includes observable contracts and behavioural
subtyping; (ii) no source extends classical supply-chain
frameworks (SLSA, in-toto, Sigstore) to hybrid
quantum-classical pipelines with semantic-level checks; (iii)
no source addresses integrity or audit-trail mechanisms
specifically for VQE pharmaceutical workflows. Manual
follow-up across arXiv and IEEE confirmed these gaps and
identified the closest individual contributions cited above.
A complete record of the queries and returned references is
maintained in the project repository (footnote~\ref{fn:repo}).

Table~\ref{tab:positioning} summarises QCIVET's position
relative to the four most adjacent threads.

\begin{table}[H]
\centering
\footnotesize
\caption{Where QCIVET sits relative to the four closest literature 
threads. The Classical, PQ-resistant, Q-augmented, and Q software 
contract columns refer to the literatures discussed in 
Sections~\ref{sec:related:classical}--\ref{sec:related:qse} respectively.}
\label{tab:positioning}
\begin{tabular}{|p{0.18\linewidth}|p{0.11\linewidth}|p{0.13\linewidth}|p{0.12\linewidth}|p{0.14\linewidth}|p{0.12\linewidth}|}
\hline
                          & \textbf{Classical}      & \textbf{PQ-resistant} & \textbf{Q-augmented} & \textbf{Q software contracts} & \textbf{QCIVET}        \\
\textbf{Question}         & in-toto / SLSA \cite{intoto, slsa}         & MBOM-PQC, ML-DSA \cite{ai_supply_chain_pqc}     & QHF / BB84 \cite{wang_qhf_blockchain, mozo_bb84_aes}          & Yamaguchi~\cite{yamaguchi_dbc_quantum}, Feng-Zhou~\cite{feng_zhou_subtyping} & (this work)           \\
\hline
Hybrid pipeline           & no                      & no                    & no                   & partial \newline (single circuit/module) & \textbf{yes}          \\
\hline
Hash-chain audit          & yes                     & yes                   & no                   & no                   & \textbf{yes}          \\
\hline
Semantic check            & no                      & no                    & no                   & yes \newline (state)          & \textbf{yes} \newline (observable) \\
\hline
Behavioural subtyping     & no                      & no                    & no                   & refinement \newline only~\cite{feng_zhou_refinement_orders}      & \textbf{yes} \newline (Liskov-Wing)~\cite{liskov_wing} \\
\hline
Real-time abort           & partial                 & no                    & no                   & yes \newline (intra-circuit)  & \textbf{yes} \newline (cross-stage) \\
\hline
Cloud-result audit        & no                      & no                    & no                   & no                   & \textbf{yes}          \\
\hline
\end{tabular}
\end{table}

In Table~\ref{tab:positioning}, ``semantic check'' means
verifying that quantum-stage outputs satisfy a calibrated
observable contract; ``hybrid pipeline'' means stage-level
coverage of a workflow with both classical and quantum stages;
and ``behavioural subtyping''~\cite{liskov_wing} means an
explicit Liskov-Wing substitutability discipline (rather than
refinement~\cite{feng_zhou_refinement_orders} or compliance).

\section{Background}
\label{sec:bg}

\subsection{Contracts and Behavioural Subtyping}

A contract for a class $A$ is a triple of invariants,
preconditions, and postconditions, denoted
$\spec(A)$~\cite{meyer_oosc, liskov_wing}. A class $B$ is a
behavioural subtype of $A$ ($B \preceq A$) if every operation
of $A$ that $B$ overrides preserves the externally visible
contract of $A$: $B$ does not strengthen the precondition and
does not weaken the postcondition. This is Liskov and Wing's
classical formulation~\cite{liskov_wing} and underlies modern
object-oriented programming (OOP) type discipline.

\subsection{Quantum States, Channels, and Observables}

We follow the conventions of~\cite{nielsen_chuang, preskill}. A pure
state of a single qubit is
$\ket{\psi} = \alpha\ket{0} + \beta\ket{1}$ with
$|\alpha|^2 + |\beta|^2 = 1$ and density matrix
$\rho = \ket{\psi}\bra{\psi}$. A general quantum operation is
a completely positive trace-preserving (CPTP) map
$\cE: \rho \mapsto \cE(\rho)$. The expectation of an
observable $O$ in the state $\rho$ is
$\langle O \rangle_\rho = \mathrm{Tr}(O \rho)$.

The diamond norm of a CPTP map $\cE$ is
$\diamondnorm{\cE} = \sup_\rho \trnorm{\cE(\rho)}$, extended to
differences by $\diamondnorm{\cE_A - \cE_B}$, which gives the
maximal distinguishability of two channels in the trace
norm~\cite{watrous_qit}. This is the natural distance for
behavioural-subtyping arguments~\cite{feng_zhou_subtyping} and
plays a central role in Section~\ref{sec:formal}.

Throughout the paper we use three norms on operators, all
specialisations of the Schatten $p$-norms (notation following
Watrous~\cite[Sec.~1.1]{watrous_qit}):
\begin{itemize}[leftmargin=*, itemsep=2pt]
\item \emph{Operator norm} $\opnorm{O}$ (Schatten $\infty$-norm):
  the largest singular value of $O$. It is the correct norm for
  an observable, since for $O$ self-adjoint it equals the
  largest absolute eigenvalue, i.e.\ the maximal observable
  outcome.
\item \emph{Trace norm} $\trnorm{X}$ (Schatten $1$-norm): the
  sum of singular values of $X$. For Hermitian $X$ this equals
  the sum of absolute eigenvalues. The trace norm is the
  correct distance between density operators and the natural
  norm for a state-level channel discrepancy
  $\cE_A(\rho) - \cE_B(\rho)$.
\item \emph{Diamond norm} $\diamondnorm{\cE}$, defined above:
  the operationally meaningful distance between channels,
  equal to the maximal trace distance between channel outputs
  taken over all (possibly entangled) inputs to
  $\cE \otimes \mathrm{id}$.
\end{itemize}
The three norms are linked by H{\"o}lder's inequality on
Hermitian operators,
$|\mathrm{Tr}(O\, X)| \le \opnorm{O}\, \trnorm{X}$, and by the
finite-dimensional equivalence between trace-norm and
diamond-norm bounds~\cite{paulsen_completely} (with a conversion factor at most $d$,
where $d := \dim \cH$). Section~\ref{sec:formal} formalises
these constants and uses them to prove soundness, completeness,
and compositionality of the contract framework.

For a bipartite system $\rho_{AB}$, the reduced state of $B$
is $\rho_B = \mathrm{Tr}_A(\rho_{AB})$. The reduced state
contains all marginals on $B$ but no phase information about
correlations with $A$.

\subsection{Hash Chains and External Anchors}

Following~\cite{schneier_kelsey, merkle}, an audit log is
\emph{tamper-evident} when each entry is computed as
$h_i = H(h_{i-1}\,\|\,\spec_i)$ for a collision-resistant hash
$H$, with $h_0$ a public genesis. A modification to $\spec_i$
that goes undetected requires producing a collision, which is
infeasible for SHA-256 under standard assumptions. To detect a
globally consistent rewrite (an adversary who replaces every
record after the fact), one binds the chain to an external
anchor: a publicly verifiable log such as Sigstore
Rekor~\cite{sigstore}, an RFC 3161 timestamp
authority~\cite{rfc3161}, or a public blockchain commitment.

\section{The QCIVET Framework}
\label{sec:framework}

\paragraph{Stages and spec records.}

A hybrid quantum--classical pipeline is a sequence of stages
$S_1, S_2, \dots, S_n$. Each stage $S_i$ has:
\begin{itemize}[leftmargin=*, itemsep=2pt, topsep=2pt]
\item a name (a string identifier);
\item a spec $\sigma_i$, a JSON-serialisable record of all
parameters that determine the stage's behaviour (transpiler
version, backend identifier, calibration snapshot hash,
classifier threshold, ansatz family, and so on);
\item optionally a list of observables measured during the
stage, each with a reference value and a calibrated tolerance
(only quantum stages carry these).
\end{itemize}
The spec is the unit of accountability: any change to the
behaviour of $S_i$ must reflect in $\sigma_i$.

\subsection{Subtyping for Stages}

Two stages $A$ and $B$ that purport to do ``the same thing''
(an override) live in a behavioural-subtyping relation. We
model each as a CPTP channel $\cE_A, \cE_B$ and introduce a
finite calibrated observable family $(\cO_A, \eps)$, where
$\cO_A = \{O_1, \dots, O_k\}$ is a set of self-adjoint
operators and $\eps \ge 0$ a tolerance.

\begin{definition}[Contract-preserving subtyping]
\label{def:cps}
We say $B \preceq_{(\cO_A, \eps)} A$ when, for every input
state $\rho$,
\[
\max_{O \in \cO_A}\,
\big|\,\mathrm{Tr}\!\bigl(O\,\cE_B(\rho)\bigr)
- \mathrm{Tr}\!\bigl(O\,\cE_A(\rho)\bigr)\,\big|
\le \eps.
\]
\end{definition}

When $\cO_A$ spans the operator basis of the underlying
Hilbert space (for instance all single-qubit Paulis
$\{X, Y, Z\}$ for a qubit), the test is informationally
complete; otherwise it is a projection onto the subspace of
contract-relevant observables.

\paragraph{Hash-chained audit trail.}

QCIVET maintains an audit log of triples
$(\sigma_i, h_{i-1}, h_i)$ with
\[
h_i = H\!\big(\,h_{i-1} \,\|\, \mathrm{canonical}(\sigma_i)\,\big),
\qquad h_0 = \texttt{0}^{64},
\]
where $\mathrm{canonical}$ is a deterministic JSON serialiser
(sorted keys, no whitespace) and $H$ is SHA-256. The chain is
appended to as the pipeline runs, and an external anchor
receives each new $h_i$ for tamper-evident timestamping.
Figure~\ref{fig:qcivet_workflow} illustrates how stage commits
stream to the QCIVET engine, which performs the hash-chain
check at every stage and the observable-deviation check at the
quantum stages, releasing the result on success and emitting
the audit trail on failure.

\begin{figure}[h]
\centering
\begin{tikzpicture}[
  font=\sffamily\footnotesize,
  stage/.style={
    draw, rounded corners=2pt, minimum width=1.7cm, minimum height=0.7cm,
    align=center, fill=blue!10, draw=blue!50, thick
  },
  qstage/.style={
    draw, rounded corners=2pt, minimum width=1.7cm, minimum height=0.7cm,
    align=center, fill=orange!15, draw=orange!70, thick, line width=0.8pt
  },
  engine/.style={
    draw, rounded corners=4pt, minimum width=11cm, minimum height=1.2cm,
    align=center, fill=green!8, draw=green!50, thick
  },
  arrow/.style={->, >=stealth, thick, draw=gray!70},
  hashlink/.style={->, >=stealth, dashed, draw=red!60, thick}
]

\node[font=\sffamily\small\bfseries] at (5.5, 1.0) 
  {Hybrid Quantum-Classical Pipeline};

\node[stage] (s1) at (0.0, 0)   {circuit\_def};
\node[stage] (s2) at (2.0, 0)   {transpile};
\node[stage] (s3) at (4.0, 0)   {backend\_sel};
\node[qstage] (s4) at (6.0, 0)  {calibration};
\node[qstage] (s5) at (8.0, 0)  {execution};
\node[stage] (s6) at (10.0, 0)  {meas\_output};

\draw[arrow] (s1) -- (s2);
\draw[arrow] (s2) -- (s3);
\draw[arrow] (s3) -- (s4);
\draw[arrow] (s4) -- (s5);
\draw[arrow] (s5) -- (s6);

\draw[hashlink] (s1.south) -- (s1.south |- 0,-1.4);
\draw[hashlink] (s2.south) -- (s2.south |- 0,-1.4);
\draw[hashlink] (s3.south) -- (s3.south |- 0,-1.4);
\draw[hashlink] (s4.south) -- (s4.south |- 0,-1.4);
\draw[hashlink] (s5.south) -- (s5.south |- 0,-1.4);
\draw[hashlink] (s6.south) -- (s6.south |- 0,-1.4);

\node[engine] (engine) at (5.0, -2.1) 
  {\textbf{QCIVET Engine}\\[2pt]
   \scriptsize Hash chain (every stage) $\;\bullet\;$ 
   Observable-deviation check (quantum stages) $\;\bullet\;$ 
   External anchor};

\node[draw, rounded corners=2pt, fill=gray!10, draw=gray!50,
      font=\sffamily\scriptsize, align=center,
      minimum width=10cm, minimum height=0.6cm] (output) at (5.0, -3.5)
  {OK $\to$ release result \quad $\bullet$ \quad 
   FAIL $\to$ abort and emit audit trail};

\draw[arrow] (engine.south) -- (output.north);

\node[font=\sffamily\scriptsize, align=center] at (5.0, -4.3)
  {%
   \tikz\draw[fill=blue!10, draw=blue!50] (0,0) rectangle (0.3,0.2);\,classical stage \quad
   \tikz\draw[fill=orange!15, draw=orange!70, line width=0.8pt] (0,0) rectangle (0.3,0.2);\,quantum stage \quad
   \tikz\draw[->, >=stealth, dashed, draw=red!60, thick] (0,0.1) -- (0.5,0.1);\,hash commit
  };

\end{tikzpicture}
\caption{QCIVET workflow. A hybrid quantum-classical pipeline
streams stage commits to the QCIVET engine, which performs
hash-chained syntactic integrity checks at every stage and
calibrated observable-deviation checks at the quantum stages
(orange). On success, the result is released to the customer;
on failure, the engine aborts and emits the audit trail.}
\label{fig:qcivet_workflow}
\end{figure}

\FloatBarrier

\subsection{Real-Time Verification Engine}

The \texttt{IntegrityVerifier} class
(Listing~\ref{lst:engine}) exposes a single
\texttt{commit\_stage} method that the host pipeline calls when
each stage completes. A commit performs four operations:
\begin{enumerate}[leftmargin=*, itemsep=2pt, topsep=2pt]
\item check the observables (if any) against
$(\cO_A, \eps)$, raising \texttt{IntegrityViolation}
(kind=\texttt{observable}) on violation;
\item compute $h_i$ from $h_{i-1}$ and $\sigma_i$;
\item verify chain-head consistency, raising
\texttt{IntegrityViolation} (kind=\texttt{hash}) on a desync;
\item submit $h_i$ to the external anchor, raising
\texttt{IntegrityViolation} (kind=\texttt{anchor}) on
submission failure.
\end{enumerate}
On any violation the host pipeline catches the exception and
aborts before launching downstream stages. The cost of a
commit, measured in Section~\ref{sec:rt}, is below 0.1 ms on
commodity hardware. Globally consistent rewrites that escape
the local chain check are caught by a separate
\texttt{verify\_against\_anchor} method that confirms the
local chain appears as a contiguous block in the anchor log.

\begin{lstlisting}[caption={Pipeline-host pattern using QCIVET.},
                   label={lst:engine}]
verifier = IntegrityVerifier(anchor=ExternalAnchor("rekor.log"))
try:
    for stage in pipeline:
        out = stage.run()
        verifier.commit_stage(StageResult(
            name=stage.name,
            spec=stage.spec_dict(),
            observables=stage.observables(),  # only on quantum stages
        ))
except IntegrityViolation as e:
    log.error(f"halted at stage {e.stage_index} ({e.kind}): {e}")
    pipeline.abort()
\end{lstlisting}

\paragraph{Sneaky overrides.}

A subtle threat is the override that passes a weak contract
but fails an informationally complete one. Take
$A = R_y(\theta)$ and $B = R_y(\theta) \cdot S$, where $S$ is
the phase gate. On the input $\ket{0}$, both prepare states
with the same $\langle Z \rangle$, so $B$ is indistinguishable
from $A$ on $\cO_A = \{Z\}$. But the two channels differ on
$X$ and $Y$ by an amount up to $|\sin\theta|$, and an
$\cO_A = \{X, Y, Z\}$ contract immediately exposes $B$. A
sufficient condition for the existence of such overrides is
given by Proposition~\ref{prop:sneaky}.

\section{Formal Properties}
\label{sec:formal}

This section makes Definition~\ref{def:cps} mathematically
precise. We give a soundness theorem (channel-level closeness
implies observable-level closeness), a conditional
completeness result (the converse holds when the observable
family is informationally complete), and a compositionality
theorem (contract preservation transports along inheritance
chains). We then characterise the sneaky-subtype failure mode
introduced informally in Section~4.2 and clarify the
relationship between our setting and the foundational
refinement orders of Feng et al.~\cite{feng_zhou_refinement_orders,feng_zhou_subtyping},
showing how the two frameworks complement each other.

\subsection{Setting and Notation}

Throughout this section we work with finite-dimensional
Hilbert spaces. Let $\cH$ have dimension
$d := \dim \cH < \infty$. For an $n$-qubit system,
$d = 2^n$; the single-qubit case is $d = 2$. Let
$\cD(\cH)$ denote the set of density operators on $\cH$. For
two CPTP channels $\cE_A, \cE_B : \cD(\cH) \to \cD(\cH)$, the
diamond norm of their difference is
\begin{equation}
\diamondnorm{\cE_A - \cE_B} = \sup_{\rho \in \cD(\cH \otimes \cH)}
\trnorm{(\cE_A \otimes \mathrm{id})(\rho) - (\cE_B \otimes \mathrm{id})(\rho)}.
\end{equation}
The diamond norm is the operationally meaningful distance
between channels: it equals the maximal trace distance
achievable by any input, possibly entangled with an
ancilla~\cite{watrous_qit}.

For an observable $O$ (self-adjoint on $\cH$), $\opnorm{O}$
denotes the operator norm and $\trnorm{X}$ denotes the trace
norm, both as introduced in Section~\ref{sec:bg}. We write
$K_A := \max_{O \in \cO_A} \opnorm{O}$ for the tight spectrum
bound of the contract observable family $\cO_A$.

\begin{definition}[Informationally complete observable family]
\label{def:ic}
A family $\cO \subset \mathrm{Herm}(\cH)$ is informationally
complete if its real linear span equals $\mathrm{Herm}(\cH)$
itself; equivalently, for any two states
$\rho_1, \rho_2 \in \cD(\cH)$,
$\mathrm{Tr}(O \rho_1) = \mathrm{Tr}(O \rho_2)$ for all
$O \in \cO$ implies $\rho_1 = \rho_2$.
\end{definition}

For a single qubit ($d = 2$), the Pauli set $\{X, Y, Z\}$ is
informationally complete; the singleton $\{Z\}$ is not.

\subsection{Soundness}

\begin{theorem}[Soundness]
\label{thm:soundness}
Let $A, B$ be classes with channels $\cE_A, \cE_B$ and let
$\cO_A$ be a contract observable family with tight spectrum
bound $K_A$. If $\diamondnorm{\cE_A - \cE_B} \le \delta$, then
for every input set $\cS \subseteq \cD(\cH)$,
\begin{equation}
B \preceq_{(\cO_A,\,K_A \delta),\, \cS} A.
\end{equation}
\end{theorem}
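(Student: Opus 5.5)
The plan is to reduce the contract inequality of Definition~\ref{def:cps}, restricted to inputs $\rho \in \cS$, to a single application of H{\"o}lder's inequality. After that, one bound comparing the trace norm of the output difference with the diamond norm finishes the argument. Since the relation $\preceq_{(\cO_A,\eps),\cS}$ only asks for the deviation bound on states in $\cS \subseteq \cD(\cH)$, it suffices to prove it for an arbitrary $\rho \in \cD(\cH)$ and an arbitrary $O \in \cO_A$. The restriction to $\cS$ is then automatic, and taking the maximum over the finite family $\cO_A$ costs nothing.

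Fix $\rho$ and $O$. By linearity of the trace,
\[
\mathrm{Tr}\bigl(O\,\cE_B(\rho)\bigr) - \mathrm{Tr}\bigl(O\,\cE_A(\rho)\bigr) = \mathrm{Tr}\bigl(O\,X\bigr), \qquad X := \cE_B(\rho) - \cE_A(\rho).
\]
The operator $X$ is Hermitian, being a difference of density operators. H{\"o}lder's inequality, as recalled in Section~\ref{sec:bg}, then gives $|\mathrm{Tr}(OX)| \le \opnorm{O}\,\trnorm{X} \le K_A\,\trnorm{X}$, using the definition $K_A = \max_{O\in\cO_A}\opnorm{O}$.

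The remaining step is $\trnorm{\cE_A(\rho)-\cE_B(\rho)} \le \diamondnorm{\cE_A-\cE_B}$, which is where the definition of the diamond norm must be handled carefully. Take the product input $\rho \otimes \tau$ for any fixed density operator $\tau$ on the ancilla. This is an element of $\cD(\cH\otimes\cH)$, so by the supremum in the definition
\[
\trnorm{\bigl((\cE_A-\cE_B)\otimes\mathrm{id}\bigr)(\rho\otimes\tau)} \le \diamondnorm{\cE_A-\cE_B}.
\]
The left-hand side equals $\trnorm{(\cE_A(\rho)-\cE_B(\rho))\otimes\tau} = \trnorm{\cE_A(\rho)-\cE_B(\rho)}\,\trnorm{\tau}$, because the trace norm is multiplicative on tensor products and $\trnorm{\tau}=1$. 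Alternatively, one can note that the partial trace over the ancilla is trace-norm contractive.

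Chaining these bounds gives $|\mathrm{Tr}(O\,\cE_B(\rho)) - \mathrm{Tr}(O\,\cE_A(\rho))| \le K_A\delta$ for every $O\in\cO_A$ and every $\rho\in\cS$, which is exactly the claimed relation $B \preceq_{(\cO_A,\,K_A\delta),\,\cS} A$.

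I expect no real obstacle. The only points needing care are:
\begin{itemize}
\item the product-state reduction from the diamond norm to the unstabilised trace distance;
\item the remark that no factor of $d$ appears in this direction, because the diamond norm dominates the plain induced trace norm (the factor $d$ enters only in the converse, completeness direction).
\end{itemize}
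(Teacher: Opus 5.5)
Your proof is correct and follows the paper's own argument: H{\"o}lder's inequality gives $|\mathrm{Tr}(O\,X)| \le \opnorm{O}\,\trnorm{X} \le K_A\,\trnorm{X}$, and the output trace distance is then bounded by the diamond norm. Your product-input step with $\rho\otimes\tau$ spells out the inequality $\trnorm{\cE_B(\rho)-\cE_A(\rho)} \le \diamondnorm{\cE_A-\cE_B}$, which the paper uses without comment, and it holds under either of the paper's two definitions of the diamond norm.
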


\begin{proof}
For any $\rho \in \cS$ and $O \in \cO_A$, by H{\"o}lder's
inequality on Hermitian operators,
\begin{align}
\bigl| \mathrm{Tr}(O\, \cE_B(\rho)) - \mathrm{Tr}(O\, \cE_A(\rho)) \bigr|
&= \bigl| \mathrm{Tr}(O\, [\cE_B(\rho) - \cE_A(\rho)]) \bigr| \\
&\le \opnorm{O}\, \trnorm{\cE_B(\rho) - \cE_A(\rho)} \\
&\le K_A\, \diamondnorm{\cE_A - \cE_B} \le K_A \delta.
\end{align}
Taking suprema over $\rho \in \cS$ and $O \in \cO_A$ gives the
claim.
\end{proof}

\subsection{Conditional Completeness}

\begin{theorem}[Conditional completeness]
\label{thm:completeness}
Suppose $\cO_A$ is informationally complete in
$\mathrm{Herm}(\cH)$, and let $\cS \subseteq \cD(\cH)$ be a
state set whose affine span is the full set of density
operators. Then there exists a constant $C(\cO_A) > 0$,
depending only on the observable family, such that for any
CPTP channels $\cE_A, \cE_B$,
\begin{equation}
B \preceq_{(\cO_A,\,\eps),\, \cS} A
\Longrightarrow \diamondnorm{\cE_A - \cE_B} \le C(\cO_A)\, \eps.
\end{equation}
For a single qubit with $\cO_A = \{X, Y, Z\}$ and
$\cS \supseteq \{\ket{0},\ket{1},\ket{+},\ket{-}\}$, one may
take $C(\cO_A) = 2\sqrt{2}$.
\end{theorem}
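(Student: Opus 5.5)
The plan is to reduce the diamond-norm bound to a statement about the linear map $\Delta := \cE_B - \cE_A$ on a finite basis. Two facts make this possible: $\Delta$ is linear and Hermiticity-preserving, and it annihilates traces because both channels are trace preserving. The argument has four steps.

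\textbf{Step 1 (output side).} For each $\rho \in \cS$, the operator $\Delta(\rho)$ is traceless and Hermitian. Informational completeness of $\cO_A$ makes $X \mapsto (\mathrm{Tr}(O X))_{O \in \cO_A}$ injective on traceless Hermitian operators. By finite dimensionality it therefore has a bounded inverse, so $\trnorm{\Delta(\rho)} \le c_{\mathrm{out}}(\cO_A)\,\eps$. For a qubit with $\{X,Y,Z\}$ this is explicit. Write $\Delta(\rho) = \tfrac12(aX+bY+cZ)$. Then $a = \mathrm{Tr}(X\Delta(\rho))$, and similarly for $b$ and $c$, so $|a|,|b|,|c| \le \eps$. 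The trace norm is $\sqrt{a^2+b^2+c^2} \le \sqrt{3}\,\eps$.

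\textbf{Step 2 (input side).} Next we transport this bound from $\cS$ to all operators. Because the affine span of $\cS$ is all of $\cD(\cH)$, every Hermitian $Y$ can be written as a real combination of elements of $\cS$ whose coefficient $\ell_1$-norm is at most $c_{\mathrm{in}}(\cS)\trnorm{Y}$. Linearity then gives $\trnorm{\Delta(Y)} \le c_{\mathrm{in}} c_{\mathrm{out}}\, \eps\, \trnorm{Y}$, which bounds the induced $1\to1$ norm of $\Delta$.

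In the qubit case we use $\Delta(P/2) = \tfrac12\bigl(\Delta(\rho_{P,+}) - \Delta(\rho_{P,-})\bigr)$ for each Pauli $P$, where $\rho_{P,\pm}$ are the $\pm1$ eigenprojectors of $P$. We also use $\Delta(I/2) = \tfrac12(\Delta(\ket0\bra0)+\Delta(\ket1\bra1))$. Here the stated set $\{\ket0,\ket1,\ket+,\ket-\}$ is not enough on its own: its affine span is only the $xz$-plane of the Bloch ball, and $\Delta(Y)$ is unconstrained by it. I would therefore use the hypothesis that the affine span of $\cS$ is all of $\cD(\cH)$, which forces $\cS$ to contain, for example, $\ket{\pm i}$ as well. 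With the full Pauli eigenbasis, the action of $\Delta$ on the Pauli operators is controlled entry-wise by $\eps$. I will keep this bookkeeping as a $3\times 3$ real matrix $M$ with $\Delta(P_j/2) = \tfrac12\sum_k M_{kj}P_k$.

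\textbf{Step 3 (ancilla).} The diamond norm allows entangled inputs on $\cH\otimes\cH$. For this I invoke the finite-dimensional equivalence cited in Section~\ref{sec:bg}, namely $\diamondnorm{\Delta} \le d\,\|\Delta\|_{1\to1}$, which for a qubit costs a factor $2$. Alternatively, one can bound directly: expand a pure bipartite input in the Pauli basis on the first factor, apply $\Delta\otimes\mathrm{id}$, and use the bound on $M$ through Cauchy--Schwarz over the Schmidt coefficients.

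\textbf{Step 4 (constant).} Combining the three steps gives the existence of $C(\cO_A) = d\,c_{\mathrm{in}}c_{\mathrm{out}}$, which depends only on $\cO_A$ once $\cS$ is fixed to the Pauli eigenstates. This proves the first claim.

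\textbf{Main obstacle.} The hard part is the explicit constant $2\sqrt2$. A naive chaining of the bounds gives something like $2\sqrt3\cdot\sqrt3\,\eps$, which is far too large. To do better I would first try to bound the operator norm of $M$ rather than its entries. The aim is $\|M\|_{2\to2} \le \sqrt2\,\eps$, combined with the fact that the trace norm of a traceless qubit operator equals the Euclidean length of its Bloch vector. That would give $\|\Delta\|_{1\to1} \le \sqrt2\,\eps$, and the ancilla factor $2$ then yields $2\sqrt2\,\eps$. I expect this step either to need a slightly larger constant or to require reading $\eps$ per observable in a particular normalisation. In that case I would state the constant that the computation actually delivers.
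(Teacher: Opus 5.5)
Your existence argument follows the paper's own route: norm equivalence on traceless Hermitian operators from informational completeness, linear extension from $\cS$ to all inputs, and a finite-dimensional trace-to-diamond conversion. You also make explicit the input-side step that the paper dismisses as ``an inflated constant''. Your form $C=d\,c_{\mathrm{in}}(\cS)\,c_{\mathrm{out}}(\cO_A)$ is the correct one, because the statement's ``depending only on the observable family'' is false. To see this, take $\cE_A(\rho)=I/2$ and $\cE_B(\rho)=\tfrac12\bigl(I+\mathrm{Tr}(Y\rho)\,Y\bigr)$, which is complete dephasing in the $Y$ eigenbasis.
\begin{itemize}
\item They agree exactly on $\ket{0},\ket{1},\ket{+},\ket{-}$.
\item On the state with Bloch vector $(0,\delta,0)$, every Pauli deviation is at most $\delta$.
\item They differ by $1$ in trace norm on $\ket{+i}$.
\end{itemize}
These five states have full affine span, yet they force $C\ge 1/\delta$. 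For the same reason, the affine-span hypothesis does not force $\ket{\pm i}\in\cS$. It only forces some state with $\mathrm{Tr}(Y\rho)\neq 0$. You are right to fix $\cS$ before quoting a number.

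Your diagnosis of the qubit clause is also correct, and it locates exactly where the paper's proof breaks. First, the paper claims the four states span $\mathrm{Aff}(\cD(\cH))$; the pair above refutes the clause for $\cS=\{\ket{0},\ket{1},\ket{+},\ket{-}\}$ already at $\eps=0$. Second, it claims $\trnorm{\sigma}\le\sqrt2\,\sup_P|\mathrm{Tr}(P\sigma)|$, but $\sigma=\tfrac12(X+Y+Z)$ gives $\sqrt3$, as you found. Third, it multiplies by the factor $2$ with no input-side step at all.

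So abandon the target $2\sqrt2$. The bound $\|M\|_{2\to2}\le\sqrt2\,\eps$ you hope for is false, and so is the constant itself, even when $\cS$ contains all six Pauli eigenstates. Let $\cE_A(\rho)=I/2$ and $\cE_B(\rho)=\tfrac12\bigl(I+\eps\,(r_x+r_y+r_z)(X+Y+Z)\bigr)$, where $r$ is the Bloch vector of $\rho$. For $\eps\le 1/3$, $\cE_B$ is a measure-and-prepare channel in the eigenbasis of $(X+Y+Z)/\sqrt3$, hence CPTP. Every Pauli deviation on each of the six eigenstates equals $\eps$, so $M=\eps J$ with $J$ the all-ones matrix and $\|J\|_{2\to2}=3$. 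On the input with $r=(1,1,1)/\sqrt3$, the output difference is $\tfrac{\sqrt3}{2}\eps\,(X+Y+Z)$, whose trace norm is $3\eps$. Hence $\diamondnorm{\cE_A-\cE_B}\ge 3\eps>2\sqrt2\,\eps$.

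Here is what the computation does deliver for $\cS$ containing the six eigenstates. Set $P_0=I$ and $(P_1,P_2,P_3)=(X,Y,Z)$, and write $\Delta(I)=\sum_k s_kP_k$ and $\Delta(P_j)=\sum_k M_{kj}P_k$. The hypothesis is then exactly $|s_k|+|M_{kj}|\le\eps$. Expand any bipartite input as $\rho_{AB}=\tfrac12\sum_{j=0}^{3}P_j\otimes\tau_j$ with $\tau_j=\mathrm{Tr}_A\bigl((P_j\otimes I)\rho_{AB}\bigr)$, which satisfies $\trnorm{\tau_j}\le 1$. This gives $\diamondnorm{\Delta}\le |s|+\sum_{j=1}^{3}|Me_j|$, with Euclidean norms. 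Next use $|Me_j|\le\bigl(\sum_k(\eps-|s_k|)^2\bigr)^{1/2}$ and maximise the resulting convex bound over the vertices of $[0,\eps]^3$. The result is $\diamondnorm{\Delta}\le(1+3\sqrt2)\,\eps$. So for this $\cS$ the optimal constant lies in $[3,\,1+3\sqrt2]$.

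This direct route, your Step~3 alternative, also avoids a small issue in your main route. The inequality $\diamondnorm{\Delta}\le d\,\|\Delta\|_{1\to1}$ refers to the induced trace norm over all operators. Your Step~2 controls only Hermitian inputs, which can cost up to another factor of $2$.
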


\begin{proof}
Define the linear map
$\Delta : \mathrm{Herm}(\cH) \to \mathrm{Herm}(\cH)$ by
$\Delta(\rho) = \cE_A(\rho) - \cE_B(\rho)$. The hypothesis
gives $\sup_{O \in \cO_A} |\mathrm{Tr}(O\, \Delta(\rho))| \le
\eps$ for every $\rho \in \cS$. Since $\cO_A$ spans
$\mathrm{Herm}(\cH)$, the family of linear functionals
$\sigma \mapsto \mathrm{Tr}(\sigma O)$ is norming on
$\mathrm{Herm}(\cH)$; equivalently, there is a constant
$c(\cO_A) > 0$ with
\[
\trnorm{\sigma} \le c(\cO_A)\, \sup_{O \in \cO_A}
|\mathrm{Tr}(O\, \sigma)|
\quad \forall \sigma \in \mathrm{Herm}(\cH),
\]
because all norms on a finite-dimensional space are equivalent.
Apply this to $\sigma = \Delta(\rho)$ for $\rho \in \cS$:
$\trnorm{\Delta(\rho)} \le c(\cO_A) \eps$. Linearity of
$\Delta$ extends the bound to every $\rho \in \cD(\cH)$, with
an inflated constant. Stinespring purification of
$\cE_A - \cE_B$ then converts the state-level trace-norm bound
into a diamond-norm bound on the channel difference, again
with a finite constant depending only on $d$ and on
$c(\cO_A)$~\cite{watrous_qit}.

For the single-qubit Pauli-eigenstate case the four states
span $\mathrm{Aff}(\cD(\cH))$ and the constant is computable
directly: any traceless Hermitian $\sigma$ on $\mathbb{C}^2$
admits
$\sigma = \tfrac{1}{2}\sum_{P \in \{X,Y,Z\}} \mathrm{Tr}(P\sigma) P$,
and $\trnorm{P} = 2$ for each Pauli, so
$\trnorm{\sigma} \le \sqrt{2}\, \sup_P |\mathrm{Tr}(P\sigma)|$.
The diamond-to-trace conversion factor is at most $2$ in
dimension~$2$. Multiplying gives $C(\cO_A) \le 2\sqrt{2}$.
\end{proof}

\begin{remark}[Scaling with dimension]
\label{rem:dim_scaling}
The constant $C(\cO_A)$ depends both on the geometry of the
observable family (the factor $c(\cO_A)$ from the
norming-functional argument) and on the dimension $d$ of the
Hilbert space (from the trace-norm to diamond-norm conversion).
For an $n$-qubit system, $d = 2^n$, and the diamond conversion
factor is at most $d$ in the worst case. If $\cO_A$ is taken
to be the full $n$-qubit Pauli family
($4^n - 1$ non-identity tensor products of $\{I, X, Y, Z\}$),
the same Pauli decomposition argument yields
$c(\cO_A) \le \sqrt{4^n - 1}$, so $C(\cO_A)$ grows
polynomially in $d$ but exponentially in $n$. In practice,
deployers do not use the full Pauli family: they choose a
small structured subset
(e.g., the Pauli set on a single relevant qubit, or the
operator that defines the application-specific reference value).
The calibration data of Section~\ref{sec:exp:disc} reports the
\emph{empirically} observed constant for the contracts used in
this paper; the worst-case theoretical scaling above is an
upper bound, not a typical value.
\end{remark}

\subsection{Compositionality}

\begin{theorem}[Compositionality]
\label{thm:composition}
Let $A_1, A_2$ be supertypes with respective contract families
$\cO_{A_1}, \cO_{A_2}$ and spectrum bounds $K_1, K_2$. Suppose
$B_1 \preceq_{(\cO_{A_1},\,\eps_1)} A_1$ and
$B_2 \preceq_{(\cO_{A_2},\,\eps_2)} A_2$. Define
$A := \cE_{A_2} \circ \cE_{A_1}$ and
$B := \cE_{B_2} \circ \cE_{B_1}$, with outer contract
$\cO_A := \cO_{A_2}$. Then for every $\rho \in \cD(\cH)$ and
$O \in \cO_A$,
\begin{equation}
\bigl| \mathrm{Tr}(O\, \cE_B(\rho)) - \mathrm{Tr}(O\, \cE_A(\rho)) \bigr|
\le \eps_2 + K_2\, c(\cO_{A_1})\, \eps_1.
\end{equation}
\end{theorem}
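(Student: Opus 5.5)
We split the discrepancy by inserting the hybrid channel $\cE_{A_2}\circ\cE_{B_1}$, which applies the first subtype and then the second supertype. For fixed $\rho \in \cD(\cH)$ and $O \in \cO_A = \cO_{A_2}$, the triangle inequality gives
\begin{align}
\bigl|\mathrm{Tr}(O\,\cE_B(\rho)) - \mathrm{Tr}(O\,\cE_A(\rho))\bigr|
&\le \bigl|\mathrm{Tr}\bigl(O\,[\cE_{B_2}(\sigma) - \cE_{A_2}(\sigma)]\bigr)\bigr| \\
&\quad + \bigl|\mathrm{Tr}\bigl(O\,\cE_{A_2}[\cE_{B_1}(\rho) - \cE_{A_1}(\rho)]\bigr)\bigr|,
\end{align}
where $\sigma := \cE_{B_1}(\rho)$. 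Since $\cE_{B_1}$ is CPTP, $\sigma$ is a density operator. The first term is then controlled directly by the hypothesis $B_2 \preceq_{(\cO_{A_2},\eps_2)} A_2$ of Definition~\ref{def:cps}, applied at the input $\sigma$. Because that definition quantifies over every input state, this term is at most $\eps_2$. This is the reason to place the hybrid channel with $\cE_{B_1}$ on the inside: the stage-2 contract is then evaluated at a genuine state.

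For the second term we proceed as in the proof of Theorem~\ref{thm:soundness}. Write $\Delta_1 := \cE_{B_1}(\rho) - \cE_{A_1}(\rho)$, which is Hermitian and traceless. H\"older's inequality gives $|\mathrm{Tr}(O\,\cE_{A_2}(\Delta_1))| \le \opnorm{O}\,\trnorm{\cE_{A_2}(\Delta_1)}$. CPTP maps are contractive in trace norm on Hermitian inputs (split $\Delta_1$ into positive and negative parts, each of whose trace is preserved), so this is at most $K_2\,\trnorm{\Delta_1}$. It remains to show $\trnorm{\Delta_1} \le c(\cO_{A_1})\,\eps_1$. This is exactly the norming inequality used in the proof of Theorem~\ref{thm:completeness}, applied to $\Delta_1$:
\[
\trnorm{\Delta_1} \le c(\cO_{A_1})\sup_{O'\in\cO_{A_1}}|\mathrm{Tr}(O'\Delta_1)| \le c(\cO_{A_1})\,\eps_1,
\]
where the last step is the hypothesis $B_1 \preceq_{(\cO_{A_1},\eps_1)} A_1$. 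Summing the two bounds yields $\eps_2 + K_2\,c(\cO_{A_1})\,\eps_1$.

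The main obstacle is that the norming constant $c(\cO_{A_1})$ is only finite when $\cO_{A_1}$ separates traceless Hermitian operators, that is, when it is informationally complete. Otherwise a sneaky first stage in the sense of Proposition~\ref{prop:sneaky} could produce a large $\trnorm{\Delta_1}$ while $\eps_1 = 0$. We will therefore read the statement with the convention from Theorem~\ref{thm:completeness} that $c(\cO_{A_1})$ is the norming constant, set to $+\infty$ (making the bound vacuous) when $\cO_{A_1}$ is not informationally complete. Alternatively, we could note that it suffices for the span of $\cO_{A_1}$ to norm traceless operators, since $\Delta_1$ is traceless. No input-set enlargement is needed, because Definition~\ref{def:cps} already quantifies over all input states, and no diamond-norm conversion enters. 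The argument also extends to chains of length $n$ by induction, each step adding one term $K_{j+1}\,c(\cO_{A_j})\,\eps_j$.
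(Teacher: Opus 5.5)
Your proof is correct and follows the paper's argument step for step: the same insertion of $\cE_{A_2}\circ\cE_{B_1}$, the stage-2 hypothesis applied at the state $\cE_{B_1}(\rho)$ for the first term, and H\"older, trace-norm contractivity of $\cE_{A_2}$, and the norming inequality from the proof of Theorem~\ref{thm:completeness} for the second term. Your caveat makes explicit a hypothesis the paper leaves implicit: $c(\cO_{A_1})$ is finite only when $\cO_{A_1}$ separates traceless Hermitian operators. That is exactly the property $\{X,Y,Z\}$ has, since its real span is the traceless Hermitian operators rather than all of $\mathrm{Herm}(\mathbb{C}^2)$.
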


\begin{proof}
We use the standard ``add-and-subtract'' decomposition: insert
the auxiliary term $\cE_{A_2}(\cE_{B_1}(\rho))$ to obtain
\begin{align}
\cE_B(\rho) - \cE_A(\rho)
  &= \cE_{B_2}(\cE_{B_1}(\rho)) - \cE_{A_2}(\cE_{A_1}(\rho)) \\
  &= \underbrace{\bigl[\cE_{B_2}(\cE_{B_1}(\rho))
                       - \cE_{A_2}(\cE_{B_1}(\rho))\bigr]}_{\text{(I): inner-stage discrepancy at the second stage}}
   + \underbrace{\cE_{A_2}\bigl[\cE_{B_1}(\rho)
                       - \cE_{A_1}(\rho)\bigr]}_{\text{(II): first-stage error propagated through } \cE_{A_2}}.
\end{align}
Fix any $O \in \cO_{A_2}$ with $\opnorm{O} \le K_2$ and bound
each term separately.

\paragraph{Term (I).}
Both branches share the input
$\cE_{B_1}(\rho) \in \cD(\cH)$, so the second-stage hypothesis
$B_2 \preceq_{(\cO_{A_2},\,\eps_2)} A_2$ applies directly to
this input:
\[
\bigl| \mathrm{Tr}\bigl(O\, [\cE_{B_2}(\cE_{B_1}(\rho))
                            - \cE_{A_2}(\cE_{B_1}(\rho))]\bigr) \bigr|
\le \eps_2.
\]

\paragraph{Term (II).}
By H{\"o}lder's inequality on Hermitian operators,
\[
\bigl| \mathrm{Tr}\bigl(O\, \cE_{A_2}[\cE_{B_1}(\rho)
                                       - \cE_{A_1}(\rho)]\bigr) \bigr|
\le \opnorm{O}\, \trnorm{\cE_{A_2}[\cE_{B_1}(\rho)
                                     - \cE_{A_1}(\rho)]}.
\]
Since $\cE_{A_2}$ is a CPTP map, it is a trace-norm contraction
on Hermitian operators~\cite[Sec.~3.3]{watrous_qit}, so
$\trnorm{\cE_{A_2}(Y)} \le \trnorm{Y}$ for every Hermitian $Y$.
Applying this with
$Y = \cE_{B_1}(\rho) - \cE_{A_1}(\rho)$,
\[
\trnorm{\cE_{A_2}[\cE_{B_1}(\rho) - \cE_{A_1}(\rho)]}
\le \trnorm{\cE_{B_1}(\rho) - \cE_{A_1}(\rho)}.
\]
The first-stage hypothesis
$B_1 \preceq_{(\cO_{A_1},\,\eps_1)} A_1$ together with the
norming-functional argument inside the proof of
Theorem~\ref{thm:completeness} (the state-level trace-norm
bound, applied \emph{before} the diamond-to-trace conversion)
yields
$\trnorm{\cE_{B_1}(\rho) - \cE_{A_1}(\rho)}
\le c(\cO_{A_1})\, \eps_1$ for every
$\rho \in \cD(\cH)$.\footnote{Equivalently, one may use the
diamond-norm bound $\diamondnorm{\cE_{A_1} - \cE_{B_1}}
\le C(\cO_{A_1})\, \eps_1$ from Theorem~\ref{thm:completeness}
and then specialise to a single input $\rho$, absorbing the
diamond-to-trace conversion into the constant; the
resulting bound has the same $O(\eps_1)$ scaling.}
Combining,
\[
\bigl| \mathrm{Tr}\bigl(O\, \text{(II)}\bigr) \bigr|
\le K_2\, c(\cO_{A_1})\, \eps_1.
\]

\paragraph{Combining.}
By the triangle inequality on the two terms,
\[
\bigl| \mathrm{Tr}(O\, \cE_B(\rho))
       - \mathrm{Tr}(O\, \cE_A(\rho)) \bigr|
\le \eps_2 + K_2\, c(\cO_{A_1})\, \eps_1,
\]
which is the claimed bound.
\end{proof}

\begin{remark}[Practical reading]
Theorem~\ref{thm:composition} says errors compose roughly
additively along an inheritance chain. A pipeline of $n$ valid
overrides each at tolerance $\eps$ has worst-case observable
deviation $O(n\eps)$ at the end. This is the kind of
book-keeping a static analyser or audit tool must track in
practice.
\end{remark}

\subsection{The Sneaky Subtype, Formally}

\begin{proposition}[Sneakiness characterisation]
\label{prop:sneaky}
Let $\cE_A, \cE_B$ be unitary single-qubit channels and let
$\cO_A \subsetneq \mathrm{Herm}(\mathbb{C}^2)$. There exists a
sneaky subtype $\cE_B \neq \cE_A$ with
$B \preceq_{(\cO_A,\,0)} A$ if and only if $\cO_A$ is not
informationally complete.
\end{proposition}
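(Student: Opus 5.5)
The plan is to prove the two directions separately and to work in the Bloch picture throughout. Write $\cE_A(\rho)=U_A\rho U_A^\dagger$ and $\cE_B(\rho)=U_B\rho U_B^\dagger$. Every $O\in\cO_A$ decomposes as $O=o_0 I+\vec o\cdot\vec\sigma$. The identity component never distinguishes two CPTP outputs, so only the traceless part $V:=\mathrm{span}_{\mathbb R}\{\vec o : O\in\cO_A\}\subseteq\mathbb R^3$ matters. In these terms, $\cO_A$ is informationally complete (Definition~\ref{def:ic}) iff $V=\mathbb R^3$. Conjugation by a unitary $W$ acts on Bloch vectors as a rotation $R_W\in SO(3)$, and the unitary channel of $W$ is the identity channel iff $R_W=\mathrm{id}$.

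\emph{Informationally complete $\Rightarrow$ no sneaky subtype.} Suppose $B\preceq_{(\cO_A,0)}A$. Then $\mathrm{Tr}(O\,\cE_B(\rho))=\mathrm{Tr}(O\,\cE_A(\rho))$ for all $O\in\cO_A$ and all inputs. Informational completeness then gives $\cE_B(\rho)=\cE_A(\rho)$ for every $\rho$, hence $\cE_B=\cE_A$ by linearity. Equivalently, one can invoke Theorem~\ref{thm:completeness} with $\eps=0$ to get $\diamondnorm{\cE_A-\cE_B}\le C(\cO_A)\cdot 0=0$. Either route contradicts $\cE_B\neq\cE_A$.

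\emph{Not informationally complete $\Rightarrow$ a sneaky subtype exists.} The construction post-composes $A$ with a unitary $W$ that is invisible to $V$: set $U_B:=WU_A$ with $R_W\neq\mathrm{id}$. For every input, the output Bloch vectors then satisfy $\vec r_B=R_W\vec r_A$. Whenever $R_W$ fixes $V$ pointwise, we get $\vec o\cdot\vec r_B=\vec o\cdot\vec r_A$ for all $\vec o\in V$, so $B\preceq_{(\cO_A,0)}A$ while $\cE_B\neq\cE_A$. If $\dim V\le 1$, take $R_W$ to be a nontrivial rotation about $V$ (any axis if $V=0$). This generalises the $S$-gate example of Section~\ref{sec:framework}, where $V=\mathrm{span}\{\hat z\}$ and $W=S$ rotates by $\pi/2$ about $\hat z$.

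\emph{Main obstacle: $\dim V=2$.} Here the construction above breaks. The only element of $SO(3)$ fixing a plane pointwise is the identity, and the reflection through $V$ is not realised by unitary conjugation. Indeed, for $\cO_A=\{X,Z\}$, the Heisenberg condition $U_B^\dagger OU_B=U_A^\dagger OU_A$ for all inputs forces $R_{U_BU_A^\dagger}$ to fix $\hat x$ and $\hat z$, hence to be the identity. So, read literally with the ``for every input state'' clause of Definition~\ref{def:cps}, the reverse implication only holds for $\dim V\le 1$. My first attempt at this case would be to use the input-set-restricted relation $\preceq_{(\cO_A,0),\cS}$ of Theorems~\ref{thm:soundness}--\ref{thm:completeness}, taking $\cS=\{\rho_0\}$ as in the informal example on $\ket{0}$. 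Choose $\rho_0$ so that $\vec r_A=R_{U_A}\vec r_0\notin V$; then let $\vec r_B$ be the reflection of $\vec r_A$ through $V$. This is a different unit vector with the same projection onto $V$. Finally pick a unitary $W$ rotating $\vec r_A$ to $\vec r_B$ and set $U_B=WU_A$. I would state explicitly that this is the reading in which the ``if'' direction holds for every proper $\cO_A$, and that under the all-inputs reading the ``if'' direction is restricted to $\dim V\le 1$.
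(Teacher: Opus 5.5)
Your Bloch-picture analysis is correct, and your ($\Rightarrow$) direction is the paper's (Theorem~\ref{thm:completeness} with $\eps=0$, or informational completeness directly). Your $\dim V=2$ observation is also correct, and it is a defect of the statement, not of your method. $\cO_A=\{X,Z\}$ is not informationally complete. Yet any unitary $\cE_B$ with $B\preceq_{(\cO_A,0)}A$, in the all-inputs sense of Definition~\ref{def:cps}, has $R_{U_BU_A^\dagger}$ fixing $\hat x$ and $\hat z$, hence $\cE_B=\cE_A$. The paper's proof does not notice this because its ($\Leftarrow$) direction never turns the pair $\rho_1\neq\rho_2$ into a channel; it only exhibits the $S$-gate witness, which handles $\cO_A=\{Z\}$. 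The paper also writes $\cE_B:=\cE_A\circ\mathrm{Ad}_S$. That map preserves $\langle Z\rangle$ on all inputs only when $U_A^\dagger Z U_A=\pm Z$; for $U_A=R_y(\theta)$ and input $\ket{+}$ it shifts $\langle Z\rangle$ by $|\sin\theta|$. Your post-composition $U_B=WU_A$ is the correct order, and it is what $B_{\mathrm{sneaky}}$ in Section~\ref{sec:exp:setup} actually implements.

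The gap is your proposed repair. Passing to the single-input relation $\preceq_{(\cO_A,0),\{\rho_0\}}$ does not rescue the biconditional; it trades one false direction for the other. Under that relation every family, informationally complete or not, admits a sneaky subtype. Take $U_B:=U_AW$ with $W$ a nontrivial rotation about the Bloch vector of $\rho_0$ (any $W$ with $R_W\neq\mathrm{id}$ if $\rho_0=I/2$). Then $\cE_B(\rho_0)=\cE_A(\rho_0)$ exactly, so even $\{X,Y,Z\}$ detects nothing. This is precisely the informal $R_y(\theta)S$-on-$\ket{0}$ example of Section~\ref{sec:framework}, and it makes your reflection construction unnecessary. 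Your ($\Rightarrow$) argument needs agreement on all inputs, or on an $\cS$ with full affine span as Theorem~\ref{thm:completeness} assumes. The single-input reading does not supply that, so you would be proving the two directions for two different relations.

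The consistent fix is to change the condition, not the relation. With Definition~\ref{def:cps} as written, a unitary sneaky subtype exists if and only if $\dim V\le 1$, i.e.\ the traceless parts of $\cO_A$ are collinear. Your three cases already prove exactly this: the construction for $\dim V\le 1$, the rigidity argument for $\dim V=2$, and informational completeness for $\dim V=3$. This corrected statement still covers the $\{Z\}$ versus $\{X,Y,Z\}$ dichotomy that the experiments rely on.
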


\begin{proof}
($\Leftarrow$) Suppose $\cO_A$ is not informationally
complete. By Definition~\ref{def:ic}, there exist distinct
states $\rho_1 \neq \rho_2$ with
$\mathrm{Tr}(O\rho_1) = \mathrm{Tr}(O\rho_2)$ for all
$O \in \cO_A$. The phase gate $S$ provides an explicit
witness: $S$ commutes with $Z$, so
$\mathrm{Tr}(Z \rho) = \mathrm{Tr}(Z S\rho S^\dagger)$.
Setting $\cE_B := \cE_A \circ \mathrm{Ad}_S$ gives
$B \preceq_{(\{Z\},\,0)} A$ with $\cE_B \neq \cE_A$.

($\Rightarrow$) If $\cO_A$ is informationally complete, then
by Theorem~\ref{thm:completeness},
$B \preceq_{(\cO_A,\,0)} A$ forces
$\diamondnorm{\cE_A - \cE_B} = 0$ and hence $\cE_A = \cE_B$.
\end{proof}

\subsection{Comparison with Refinement Orders}

Our framework stands in a direct and constructive
relationship with the refinement calculus for quantum programs
of Feng et al.~\cite{feng_zhou_refinement_orders,feng_zhou_subtyping}.
They define $\cE_B \sqsubseteq \cE_A$ on CPTP and completely
positive trace non-increasing (CPTN) maps via the
complete-positivity ordering of the corresponding
super-operators, equivalent to a diamond-norm comparison up to
constants in finite dimension. Our observable-restricted
condition (Definition~\ref{def:cps}) is the operational
projection of this ordering onto a finite, calibration-ready
observable interface: Theorem~\ref{thm:soundness} shows that
the Feng et al.\ ordering implies ours;
Theorem~\ref{thm:completeness} shows that the converse holds
when the observable family is informationally complete; and
Proposition~\ref{prop:sneaky} characterises precisely when the
projection is lossy, which is exactly the sneaky-subtype
regime. Read this way, the two frameworks form a tight pair:
the refinement orders of Feng et al.\ give the full
denotational picture of substitutability, and our framework
gives the runtime-observable, hardware-evaluable projection of
that picture, which is the right object for an OOP-style
contract because it respects encapsulation: a class commits
only to its declared observables, not to its full channel
realisation.

\section{Experimental Validation}
\label{sec:exp}

This section instantiates the framework on a single-qubit
example and validates it under three increasingly realistic
settings: synthetic depolarising noise, the calibrated noise
models of two production IBM Quantum processors (FakeBrisbane
and FakeFez), and an end-to-end run on the real
\texttt{ibm\_fez} (Heron r2) processor accessed through the
IBM Quantum cloud.

\subsection{Reference Class and Three Subtypes}
\label{sec:exp:setup}

We fix the single-qubit supertype
\begin{equation}
A: \cE_A(\rho) = R_y(\theta)\, \rho\, R_y(\theta)^\dagger,
\qquad \theta = 2\pi/5,
\end{equation}
and study three candidate subtypes:
\begin{itemize}[leftmargin=*, itemsep=2pt]
\item $B_{\mathrm{good}}$: $\cE_{B_g}$ is the identity composition
$S R_x(\theta) S^\dagger$ acting by conjugation, which equals
$\cE_A$ via $R_y(\theta) = S R_x(\theta) S^\dagger$;
\item $B_{\mathrm{bad}}(\delta)$:
$\cE_{B_b}(\rho) = R_y(\theta+\delta)\, \rho\, R_y(\theta+\delta)^\dagger$,
an over-rotation by $\delta = 0.4$;
\item $B_{\mathrm{sneaky}}$:
$\cE_{B_s}(\rho) = S\, \cE_A(\rho)\, S^\dagger$, which
preserves $\langle Z \rangle$ exactly but flips the sign of
$\langle Y \rangle$.
\end{itemize}
The contract observable families considered are the
informationally complete $\cO_A^{\text{full}} = \{X, Y, Z\}$
and the weak $\cO_A^{\text{weak}} = \{Z\}$.

\paragraph{Inputs.}
\label{sec:exp:inputs}

We test on six input states: the four Pauli eigenstates
$\{\ket{0}, \ket{1}, \ket{+}, \ket{-}\}$ plus two off-axis
states $\ket{\psi_1} = R_z(1.3) R_y(0.7)\ket{0}$ and
$\ket{\psi_2} = R_z(0.4) R_y(2.1)\ket{0}$. The off-axis pair
together with the generic angle $\theta = 2\pi/5$ ensures that
all three Pauli expectations of the supertype are simultaneously
non-trivial on every input. This avoids ``accidental zeros''
that would be ambiguous under hardware noise (a noisy estimator
returning $0.00$ could either track a genuine zero or have lost
its signal entirely), and lets us use a single, coherent
instantiation across both the noiseless analyses
(Experiments~1--4) and the device-noise validation
(Experiments~5--6).

\subsection{Experiment 1: Subtype Separation in the Ideal Setting}
\label{sec:exp1}

\paragraph{Methodology.} The script
\texttt{quantum\_oop\_\allowbreak simulation.py} constructs the supertype
$A = R_y(2\pi/5)$ and the three candidate subtypes
$B_{\mathrm{good}}$, $B_{\mathrm{bad}}(0.4)$, $B_{\mathrm{sneaky}}$
exactly as defined in Section~\ref{sec:exp:setup}, evaluated on
the six inputs of Section~\ref{sec:exp:inputs}. For each
candidate, the script computes the analytic output state on
every input via Qiskit's \texttt{Statevector}, evaluates the
three Pauli expectations exactly via $\mathrm{Tr}(\rho O)$, and
reports the worst-case deviation from the supertype values. No
sampling is involved at this stage; the deviations are the exact
operator-norm differences.

\begin{table}[H]
\centering
\caption{Experiment 1, ideal noiseless setting. Worst-case
observable deviation across the six input states. Generated
by \texttt{quantum\_oop\_\allowbreak simulation.py}, function
\texttt{experiment\_1} (footnote~\ref{fn:repo}).}
\label{tab:exp1}
\begin{tabular}{l c c}
\toprule
Candidate & worst $\{X,Y,Z\}$ & worst $\{Z\}$ \\
\midrule
$B_{\mathrm{good}}$        & 0.000 & 0.000 \\
$B_{\mathrm{bad}}(0.4)$    & 0.395 & 0.395 \\
$B_{\mathrm{sneaky}}$      & 1.401 & 0.000 \\
\bottomrule
\end{tabular}
\end{table}

For each candidate $B$ we compute, for every input state and
every Pauli observable, the noiseless deviation
$\Delta_O^B(\rho) = |\mathrm{Tr}(O\, \cE_B(\rho))
- \mathrm{Tr}(O\, \cE_A(\rho))|$.
Table~\ref{tab:exp1} reports the worst-case deviation across
the six inputs for each of the two contract families.

The valid override $B_{\mathrm{good}}$ is at zero deviation
under both contracts. The over-rotation $B_{\mathrm{bad}}$
violates both. $B_{\mathrm{sneaky}}$ is the formal sneaky case
of Proposition~\ref{prop:sneaky}: it slips under the weak
$\{Z\}$-only contract while being maximally exposed by the
full one. Figure~\ref{fig:exp1} shows this per-(input, observable) 
pattern, with $B_{\mathrm{sneaky}}$ indistinguishable from $A$ 
on $Z$ alone but maximally exposed by $X$ and $Y$.

\begin{figure}[H]
\centering
\includegraphics[width=0.75\linewidth]{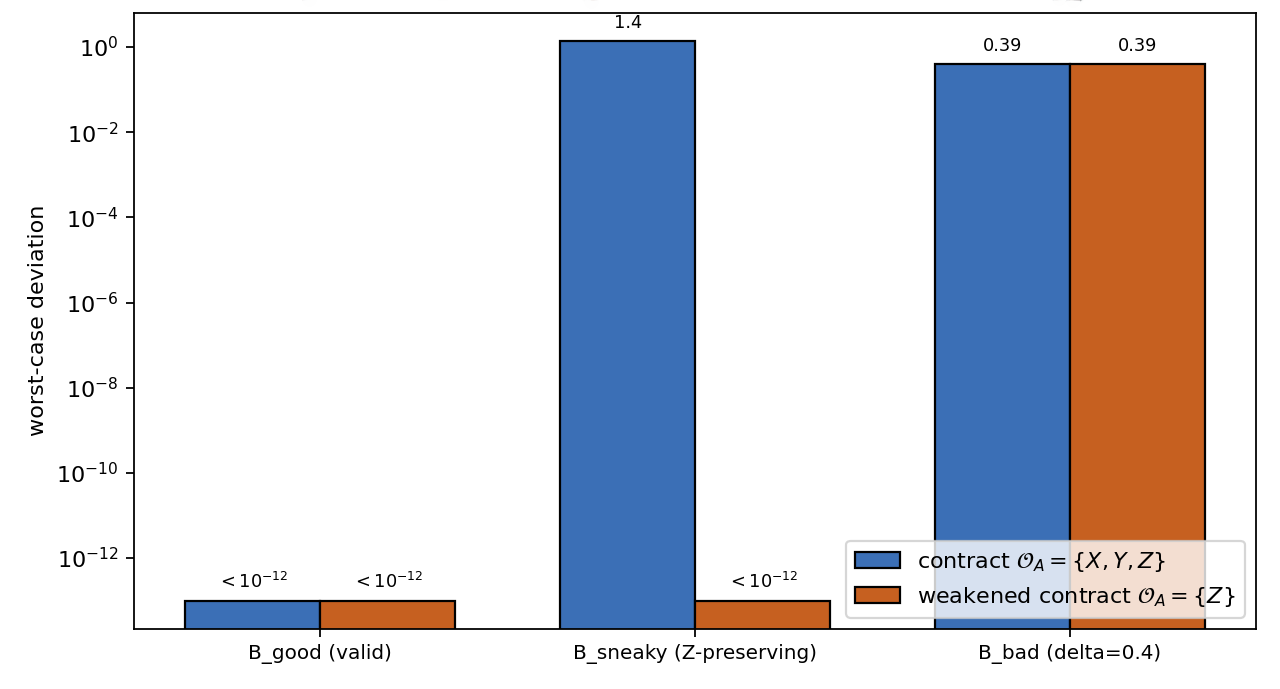}
\caption{Experiment~1. Per-(input, observable) deviation for
the three candidates in the noiseless setting. The pattern is
the operational fingerprint of
Proposition~\ref{prop:sneaky}: $B_{\mathrm{sneaky}}$ is
indistinguishable from $A$ on $Z$ alone, but maximally
exposed by $X$ and $Y$. Reproduced by
\texttt{quantum\_oop\_\allowbreak simulation.py}, function
\texttt{plot\_experiment\_1}
(footnote~\ref{fn:repo}).}
\label{fig:exp1}
\end{figure}

\FloatBarrier
\subsection{Experiment 2: Two-Qubit Reduced State}

\paragraph{Methodology.} The script
\texttt{quantum\_oop\_\allowbreak simulation.py} (function
\texttt{experiment\_2}) constructs the post-CNOT
(controlled-NOT) entangled
state $\ket{\Psi_{AB}} = \alpha\ket{00} + \beta\ket{11}$ for
ten random amplitude pairs, computes the partial trace over
qubit $A$ analytically via Qiskit's \texttt{partial\_trace}
primitive, and compares the result against the diagonal
matrix $\mathrm{diag}(|\alpha|^2, |\beta|^2)$ in Frobenius
norm. The deviation is reported per trial and as a maximum
across trials. This serves as a sanity check for the
behavioural-subtyping interpretation of encapsulation: the
reduced state carries only marginal probabilities, with all
correlation information traced away.

We numerically verify the partial-trace identity
$\rho_B = \mathrm{Tr}_A(\ket{\Psi_{AB}}\bra{\Psi_{AB}})
= |\alpha|^2 \ket{0}\bra{0} + |\beta|^2 \ket{1}\bra{1}$
on the post-CNOT state
$\ket{\Psi_{AB}} = \alpha\ket{00} + \beta\ket{11}$ for ten
random pairs $(\alpha, \beta)$. The Frobenius distance between
the analytical reduced state and the numerically computed
partial trace is bounded by $3.7 \times 10^{-33}$ in every
trial, confirming that the reduced state carries marginal
probabilities only.\footnote{Computed by
\texttt{quantum\_oop\_\allowbreak simulation.py}, function
\texttt{experiment\_2}; see footnote~\ref{fn:repo}.}

\FloatBarrier
\subsection{Experiment 3: Synthetic Depolarising Noise}

\paragraph{Methodology.} The script attaches a depolarising
channel of parameter $p$ to every single-qubit gate using
Qiskit Aer's
\texttt{depolarizing\_error}
and \texttt{NoiseModel} primitives.
The script sweeps $p$ over the eleven values
\begin{equation*}
p \in \{0, 0.001, 0.002, 0.005, 0.01, 0.02, 0.03, 0.05, 0.07, 0.10\},
\end{equation*}
and, for each $p$, runs $B_{\mathrm{good}}$
under the resulting noisy simulator with 4096 shots per
measurement, repeated for 20 independent trials. The
$X$, $Y$, $Z$ expectations are estimated from counts after
applying the appropriate basis-change rotations
($H$ for $X$; $S^\dag H$ for $Y$; identity for $Z$). The
script then aggregates each trial's worst-case deviation
across the three Paulis, and reports the mean, standard
deviation, and 95th percentile across trials.

We attach a single-parameter depolarising channel
$\cN_p(\rho) = (1-p)\rho + p\, I/2$ to each gate and sweep
$p$ over $[0, 0.05]$. For $B_{\mathrm{good}}$ the
worst-case observable deviation $\eps(p)$ averaged over inputs
is near-linear, $\eps(p) \approx 1.6\, p$ for small $p$; see
Figure~\ref{fig:exp3}. This calibrates a synthetic tolerance
budget but is not predictive of behaviour on a real device,
which motivates the device-derived experiments below.

\begin{figure}[H]
\centering
\includegraphics[width=0.7\linewidth]{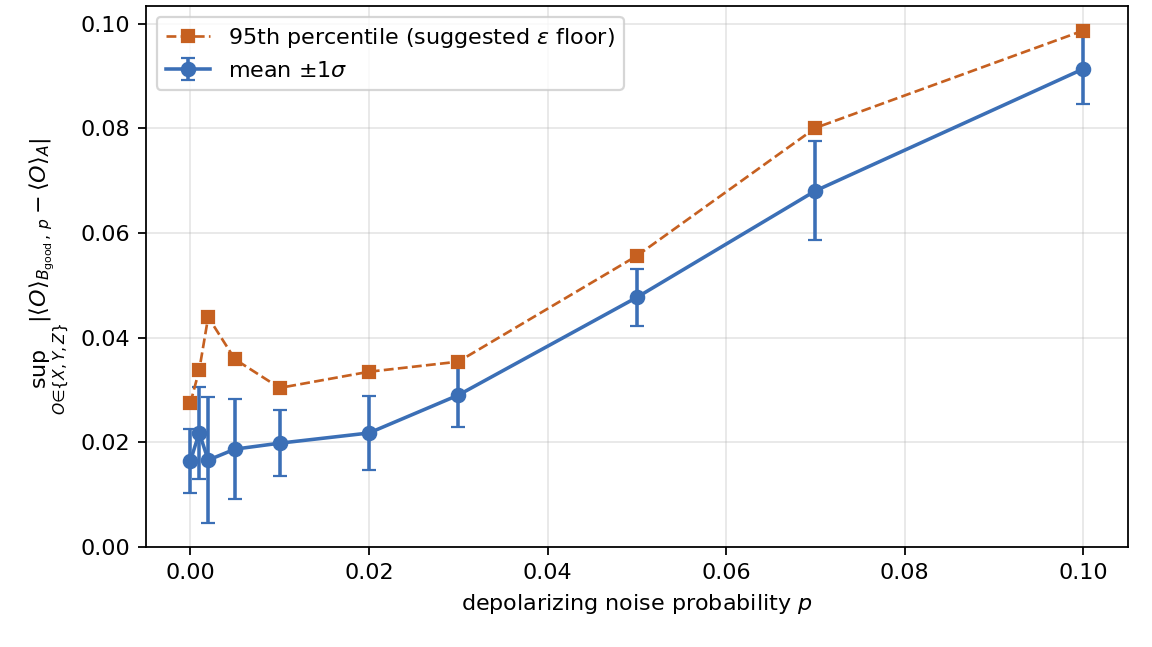}
\caption{Experiment~3. Synthetic depolarising-noise
calibration of $\eps$ for $B_{\mathrm{good}}$. The relation
is approximately linear in $p$ for small $p$. Reproduced by
\texttt{quantum\_oop\_\allowbreak simulation.py}, function
\texttt{plot\_experiment\_3}
(footnote~\ref{fn:repo}).}
\label{fig:exp3}
\end{figure}

\FloatBarrier
\subsection{Experiment 4: $\delta$-Sweep for $B_{\mathrm{bad}}$}
\label{sec:exp4}

\paragraph{Methodology.} The script sweeps $\delta$ over the
nine values $\{0, 0.01, 0.02, 0.05, 0.1, 0.2, 0.4, 0.6, 0.8\}$
and, for each $\delta$, constructs
$B_{\mathrm{bad}}(\delta) = R_y(2\pi/5+\delta)$
and computes the worst-case noiseless deviation from the
supertype across all six inputs and all three Pauli observables.
This is done analytically (no sampling) using
\texttt{Statevector} and trace formulas, so the resulting
deviations are the exact operator norms. The purpose is to
relate physical perturbation magnitude (radians) to the
observable-space deviation that a tolerance threshold has to
discriminate.

We sweep the over-rotation parameter $\delta$ from $0$ to
$\pi/2$ and report worst-case deviation under both contracts;
see Figure~\ref{fig:exp4}.
$\Delta^{B_{\mathrm{bad}}}_{\{X,Y,Z\}}(\delta)$ tracks
$\sin\delta$, while
$\Delta^{B_{\mathrm{bad}}}_{\{Z\}}(\delta)$ tracks
$|\cos\theta - \cos(\theta+\delta)|$. A tolerance choice
$\eps = 0.05$ corresponds to a detection threshold
$\delta^\star \approx 0.05$ radians.

\begin{figure}[t]
\centering
\includegraphics[width=0.7\linewidth]{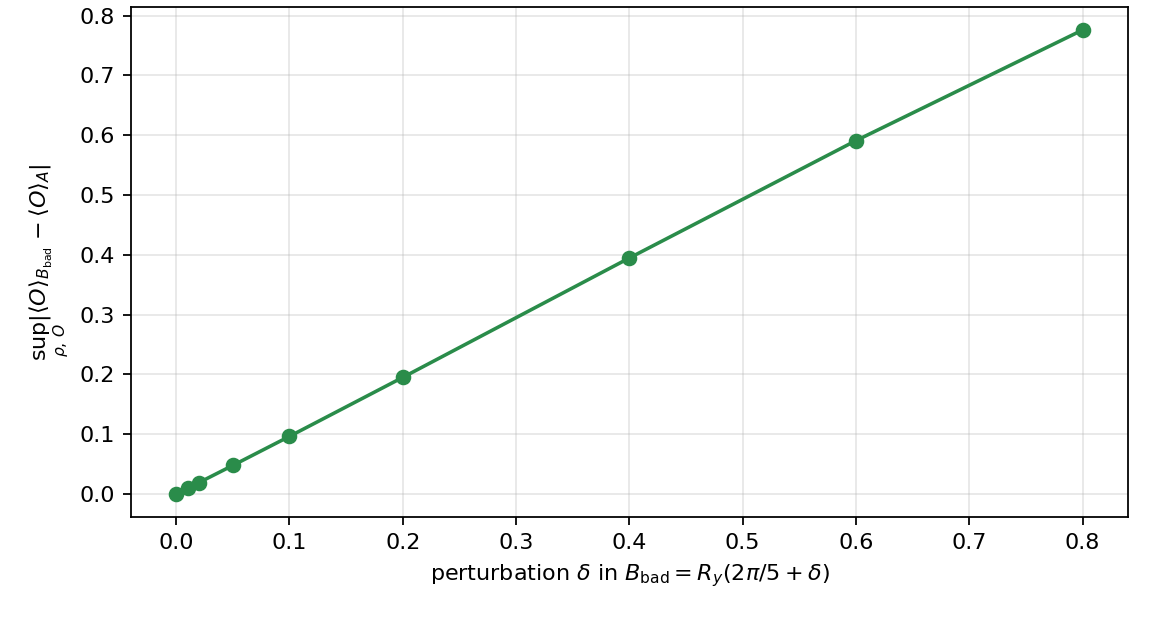}
\caption{Experiment~4. $\delta$-sweep for $B_{\mathrm{bad}}$.
The full $\{X,Y,Z\}$ contract is sensitive to small
$\delta$; the $\{Z\}$-only contract is much less so.
Reproduced by
\texttt{quantum\_oop\_\allowbreak simulation.py}, function
\texttt{plot\_experiment\_4}
(footnote~\ref{fn:repo}).}
\label{fig:exp4}
\end{figure}

\subsection{Experiment 5: $\eps$ Calibration under Realistic Device Noise}
\label{sec:exp5}

\paragraph{Methodology.} The script
\texttt{quantum\_oop\_\allowbreak device\_\allowbreak validation.py} replaces the
synthetic depolarising channel of Experiment~3 with the
calibration data published by IBM for the corresponding real
backends, accessed via
\texttt{NoiseModel.from\_backend()} on the
\texttt{FakeBrisbane} and \texttt{FakeFez} backends from
\texttt{qiskit\_ibm\_runtime.fake\_provider}. These fake
backends embed the actual T1, T2, gate error and readout error
data of the corresponding physical processors, so deviations
under them are representative of what would be observed on the
real hardware. For each (device, input state) pair the script
runs $B_{\mathrm{good}}$ in batched mode (one big sampler job
of $20 \times 4096$ shots, the per-shot outcomes split back
into 20 trials of 4096 shots each), measures
$\langle Z\rangle$ on every shot, and aggregates the
per-trial deviation against the noiseless reference value of
$\langle Z\rangle_A$.

We replace the synthetic depolarising channel with the
calibration-derived noise models of two production IBM
Quantum processors:
\begin{itemize}[leftmargin=*, itemsep=2pt]
\item \texttt{FakeBrisbane} (Eagle r3, 127 qubits, single-qubit
gate error $\sim 3 \times 10^{-4}$);
\item \texttt{FakeFez} (Heron r2, 156 qubits, single-qubit
gate error $\sim 2 \times 10^{-4}$).
\end{itemize}
For each device, $B_{\mathrm{good}}$ is run on each of the six
input states with twenty trials of 4096 shots, and the
per-input deviation $\eps$ is reported as mean and 95th
percentile across trials (Table~\ref{tab:exp5},
Figure~\ref{fig:exp5}).

The Heron r2 device produces deviations roughly half those of
Eagle r3, consistent with the published improvement in
single-qubit gate fidelity. Tolerance choices should track this
gap: for an Eagle-class deployment we recommend
$\eps = 0.07$, for a Heron-class deployment $\eps = 0.04$.

\begin{table}[t]
\centering
\caption{Experiment~5. Per-input $\eps$ for $B_{\mathrm{good}}$
under two device-derived noise models.}
\label{tab:exp5}
\begin{tabular}{l c c c c}
\toprule
& \multicolumn{2}{c}{\textbf{FakeBrisbane (Eagle r3)}}
& \multicolumn{2}{c}{\textbf{FakeFez (Heron r2)}} \\
\cmidrule(lr){2-3}\cmidrule(lr){4-5}
input & mean & 95th pct & mean & 95th pct \\
\midrule
$\ket{0}$  & 0.025 & 0.044 & 0.013 & 0.028 \\
$\ket{1}$  & 0.018 & 0.037 & 0.013 & 0.027 \\
$\ket{+}$  & 0.056 & 0.065 & 0.028 & 0.036 \\
$\ket{-}$  & 0.054 & 0.067 & 0.023 & 0.035 \\
$\psi_1$   & 0.011 & 0.023 & 0.013 & 0.027 \\
$\psi_2$   & 0.052 & 0.073 & 0.023 & 0.033 \\
\bottomrule
\end{tabular}
\end{table}

Table~\ref{tab:exp5} reports averages over twenty trials of
4096 shots each. The data are generated by the
\texttt{experiment\_5} function in
\texttt{quantum\_oop\_\allowbreak device\_\allowbreak validation.py}
(see footnote~\ref{fn:repo}).

\begin{figure}[t]
\centering
\includegraphics[width=0.85\linewidth]{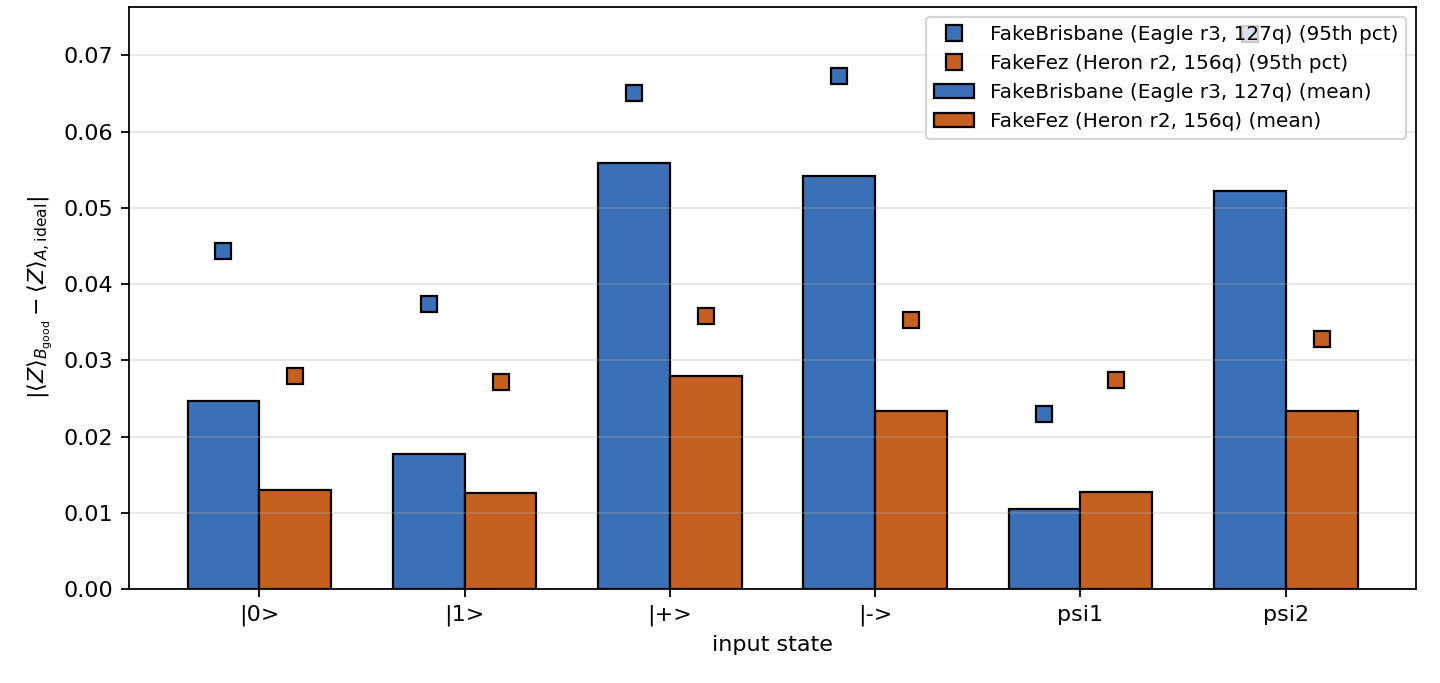}
\caption{Experiment~5. Device-noise calibration of $\eps$
under FakeBrisbane and FakeFez, per input state. Bars: mean
deviation. Squares: 95th percentile. The Heron device is
roughly twice as clean as the Eagle, in line with published
gate-fidelity numbers. Reproduced by
\texttt{quantum\_oop\_\allowbreak device\_\allowbreak validation.py}, function
\texttt{plot\_experiment\_5}
(footnote~\ref{fn:repo}).}
\label{fig:exp5}
\end{figure}

\subsection{Experiment 6: Subtype Separation under Realistic Device Noise}
\label{sec:exp6}

\paragraph{Methodology.} The script extends the protocol of
Experiment~5 to all three candidate subtypes
($B_{\mathrm{good}}$, $B_{\mathrm{bad}}(0.4)$,
$B_{\mathrm{sneaky}}$) and all three Pauli observables
($X$, $Y$, $Z$). For every (device, candidate, input, Pauli)
tuple, the script builds the appropriate basis-rotated
measurement circuit, transpiles it for the device's basis
gates and connectivity, runs it under the device noise model,
and estimates the Pauli expectation from counts. The reported
worst-case deviation is the maximum, over the 18
(input, Pauli) pairs, of the absolute difference between the
candidate's noisy estimate and the noiseless reference for
$A$. The same procedure is run under the weakened
$\{Z\}$-only contract for comparison.

We repeat the protocol of Experiment~1 under the two device
noise models. Because both experiments use the same supertype
$A = R_y(2\pi/5)$ and the same six inputs, the results in
Tables~\ref{tab:exp1} and~\ref{tab:exp6} are directly comparable
on a per-cell basis: each entry of the device-noise table can be
read as the noiseless ideal deviation of Experiment~1 plus a
hardware-induced shift. For each device and each candidate, we
compute the worst-case observable deviation across the six
inputs and the three Paulis, and separately under the
$\{Z\}$-only contract; see Table~\ref{tab:exp6} and
Figure~\ref{fig:exp6}.

\begin{table}[t]
\centering
\footnotesize
\setlength{\tabcolsep}{4pt}
\caption{Experiment~6. Worst-case observable deviation under
two device-derived noise models (\texttt{FakeBrisbane},
\texttt{FakeFez}) and on a real IBM Heron r2 processor
(\texttt{ibm\_fez}).}
\label{tab:exp6}
\begin{tabular}{l c c c c c c}
\toprule
& \multicolumn{2}{c}{\textbf{FakeBrisbane}}
& \multicolumn{2}{c}{\textbf{FakeFez}}
& \multicolumn{2}{c}{\textbf{Real \texttt{ibm\_fez}}} \\
\cmidrule(lr){2-3}\cmidrule(lr){4-5}\cmidrule(lr){6-7}
Candidate & $\{X,Y,Z\}$ & $\{Z\}$ & $\{X,Y,Z\}$ & $\{Z\}$ & $\{X,Y,Z\}$ & $\{Z\}$ \\
\midrule
$B_{\mathrm{good}}$        & 0.056 & 0.055 & 0.028 & 0.027 & 0.074 & 0.061 \\
$B_{\mathrm{bad}}(0.4)$    & 0.395 & 0.395 & 0.401 & 0.396 & 0.485 & 0.362 \\
$B_{\mathrm{sneaky}}$      & 1.359 & 0.056 & 1.386 & 0.029 & 1.420 & 0.079 \\
\bottomrule
\end{tabular}
\end{table}

The separation pattern of Experiment~1 (Table~\ref{tab:exp1})
is preserved across all three columns of
Table~\ref{tab:exp6}: $B_{\mathrm{sneaky}}$ stays maximally
exposed by the full $\{X,Y,Z\}$ contract ($1.359$--$1.420$)
while remaining hidden under the $\{Z\}$-only contract
($0.029$--$0.079$), confirming the sneaky-subtype pattern
characterised by Proposition~\ref{prop:sneaky} on real
hardware. Simulated data are generated by
\texttt{quantum\_oop\_\allowbreak device\_\allowbreak validation.py}, function
\texttt{experiment\_6}; real-hardware data are generated by
\texttt{quantum\_oop\_\allowbreak real\_qpu.py} (footnote~\ref{fn:repo}).

\begin{figure}[t]
\centering
\includegraphics[width=0.95\linewidth]{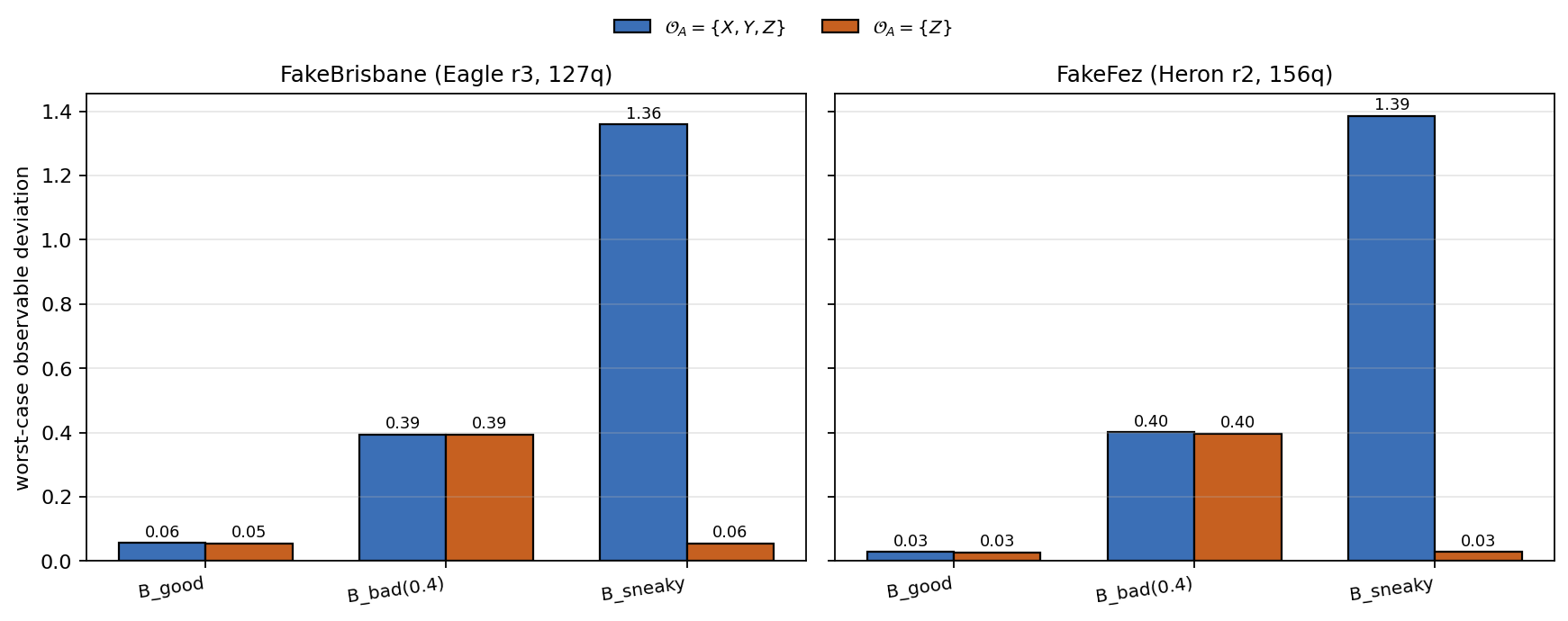}
\caption{Experiment~6. Subtype separation under realistic
device noise. The qualitative pattern of
Figure~\ref{fig:exp1} is preserved on both processors;
$B_{\mathrm{sneaky}}$ remains hidden under the
$\{Z\}$-only contract while being exposed by the full
$\{X,Y,Z\}$ one. Reproduced by
\texttt{quantum\_oop\_\allowbreak device\_\allowbreak validation.py}, function
\texttt{plot\_experiment\_6}
(footnote~\ref{fn:repo}).}
\label{fig:exp6}
\end{figure}

The separation pattern of Experiment~1 is preserved on both
devices: $B_{\mathrm{good}}$ stays within the device noise
floor, $B_{\mathrm{bad}}$ is a clear violation, and
$B_{\mathrm{sneaky}}$ remains hidden under the weak contract
while being exposed by the full one. The framework's claims
are not artefacts of the noiseless setting; they hold under
the noise models of two distinct production processors.

\subsection{Real QPU Validation}
\label{sec:exp:realqpu}

\paragraph{Methodology.} As a final validation step we run the
Experiment~6 protocol unchanged on a real IBM Heron r2
processor accessed through the IBM Quantum cloud. The script
\texttt{quantum\_oop\_\allowbreak real\_qpu.py} connects via
\texttt{QiskitRuntimeService} on the Open Plan, selects the
least-busy operational hardware backend, and submits the
fifty-four circuits ($3$ candidates $\times\ 6$ inputs
$\times\ 3$ Paulis) as a single batched
\texttt{SamplerV2} job at $4096$ shots per circuit. The run selected \texttt{ibm\_fez} (Heron~r2, $156$ qubits, 
job ID \texttt{d7todq4t738s73ci59ug}, executed on 6 May 2026
at 21:22 UTC) and completed in $137.8$ seconds of total wall
time from job submission to completion, comprising approximately
$15$ seconds of queue waiting (the device had no pending jobs
at submission) and $\approx 122$ seconds of running time
(transpilation, QPU compute, and result retrieval combined),
well within the Open Plan ten-minute monthly QPU budget. Raw
counts and per-circuit summaries are released alongside the
source code at the project repository (footnote~\ref{fn:repo})
in the \texttt{qpu\_results/} directory, enabling third-party
re-analysis and verification of the worst-case observable
deviations without resubmitting jobs to the IBM Quantum cloud.

The single-trial budget is the only methodological difference
from Experiments~5--6: trials per cell drop from $20$ (synthetic
device noise, batched on Aer) to $1$ (real hardware), so the
real-QPU numbers carry larger statistical noise. This is a
deliberate trade-off: we want to validate that the qualitative
phenomenon survives the move to physical hardware, not to
re-derive a tight calibration budget on a quota-limited account.

\paragraph{Results.} The real-hardware column of
Table~\ref{tab:exp6} reports the worst-case observable
deviations on \texttt{ibm\_fez}. The pattern predicted by
Proposition~\ref{prop:sneaky} and observed in Experiments~1
and~6 survives intact:
\begin{itemize}[leftmargin=*, itemsep=2pt]
\item $B_{\mathrm{good}}$ stays at the hardware noise floor
(worst $\{X,Y,Z\} = 0.074$, worst $\{Z\} = 0.061$). The
deviation is roughly twice the corresponding \texttt{FakeFez}
value, a factor consistent with the per-trial statistical
spread expected from a single shot of $4096$ samples per
circuit and the fact that the calibration snapshot
\texttt{FakeFez} captures is a smoothed average rather than the
live state of the device at the moment of the run.
\item $B_{\mathrm{bad}}(0.4)$ is a clear violation under both
contracts ($\{X,Y,Z\} = 0.485$, $\{Z\} = 0.362$), well above
$B_{\mathrm{good}}$ even with single-trial noise.
\item $B_{\mathrm{sneaky}}$ exhibits the sneaky fingerprint
sharply: the full-contract deviation reaches $1.420$
(comparable to the noiseless $1.401$ of Experiment~1 and
the simulated $1.386$ of \texttt{FakeFez}), while the
$\{Z\}$-only deviation is just $0.079$, indistinguishable from
$B_{\mathrm{good}}$ at the noise floor.
\end{itemize}

\paragraph{Interpretation.} The chain ideal $\to$ simulated
device noise $\to$ real hardware reads
$1.401 \to 1.386 \to 1.420$ on the full contract for
$B_{\mathrm{sneaky}}$, with the $\{Z\}$-only column tracking
the noise floor at every stage. The phenomenon characterised by
Proposition~\ref{prop:sneaky} is therefore not a property of
exact simulation; it is operationally observable on a real
quantum processor, and a tomographically rich contract is what
distinguishes it from a valid override. Conversely, deploying
the same workflow under a $\{Z\}$-only contract on
\texttt{ibm\_fez} would have admitted $B_{\mathrm{sneaky}}$ as
indistinguishable from $A$, since its $\{Z\}$-only deviation
($0.079$) overlaps the $B_{\mathrm{good}}$ noise floor
($0.061$).

\paragraph{Cost and reproducibility.} A full run of
\texttt{quantum\_oop\_\allowbreak real\_qpu.py} consumes roughly
$2$ minutes of total running time on the device once the job
exits the queue (queue waiting time depends on backend load
and varies independently of the protocol). The job recorded
here used $\approx 122$ seconds of running time, well below
the ten-minute monthly Open Plan budget. Raw shot counts and
per-cell summaries from this run are released at the project
repository under \texttt{qpu\_results/} so that the analysis
can be re-run without resubmitting the job. The recorded job
identifier (\texttt{d7todq4t738s73ci59ug}, 6 May 2026
21:22 UTC) also allows the run to be retrieved post-hoc through
\texttt{QiskitRuntimeService.job(\textit{job\_id})}.

\FloatBarrier
\subsection{Discussion}
\label{sec:exp:disc}

\paragraph{Idealised to realistic.} The triple
(Experiment~1, Experiment~6, Real-QPU validation of
Section~\ref{sec:exp:realqpu}) corresponds to the classical
scientific-method discipline: first establish the phenomenon
under idealised conditions, then under simulated device noise,
then under live hardware. The full-contract sneaky deviation
and the $\{Z\}$-only deviation reported in
Section~\ref{sec:exp:realqpu} (Interpretation paragraph) make
this transition concrete: the pattern characterised by
Proposition~\ref{prop:sneaky} is reproducible on real
hardware, not an artefact of analytic or simulated settings.

\paragraph{Sensitivity vs. budget.} Likewise, Experiment~3
(synthetic noise) and Experiment~5 (device noise) are not
duplicates: Experiment~3 demonstrates the framework's
sensitivity to a parameter we control directly, while
Experiment~5 calibrates the tolerance budget against the noise
profile a deployer would actually face.

\paragraph{The calibration window.} Together, Experiments~3
and~4 define the operational range in which a tolerance
threshold $\eps$ must lie: above the noise floor at the
operating noise level (so that a valid override is not
flagged spuriously), but below the smallest logical
perturbation we want to detect (so that a real violation is
not missed). Figure~\ref{fig:combined} overlays the two
curves on a shared vertical axis: the dashed (blue) curve
shows the 95th-percentile noise-induced deviation as a
function of depolarising probability $p$ (top axis), and the
solid (green) curve shows the deviation induced by an
over-rotation of magnitude $\delta$ (bottom axis). A workable
$\eps$ lies in the gap between the two; the gap shrinks with
both noise level and detection target, which is what
calibration is balancing.
\begin{figure}[H]
\centering
\includegraphics[width=0.85\linewidth]{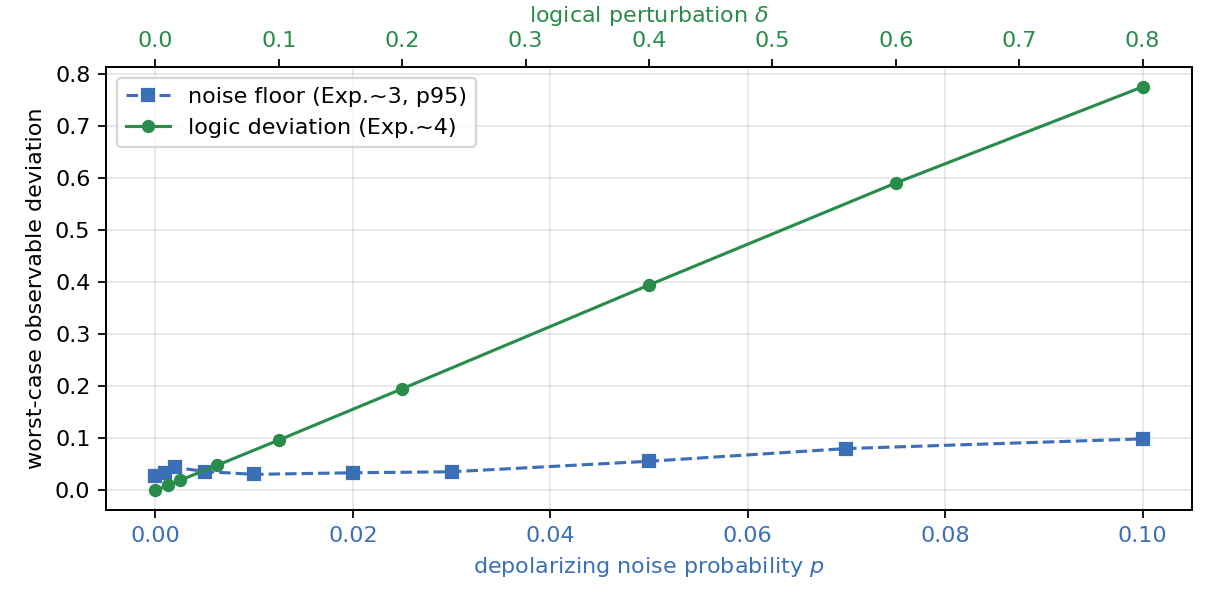}
\caption{Calibration window. The 95th-percentile noise floor
from Experiment~3 (dashed, top axis) and the logical
deviation curve from Experiment~4 (solid, bottom axis)
plotted on a shared vertical axis. A tolerance choice
$\eps$ must lie above the former at the operating noise
level and below the latter at the smallest violation of
interest. Reproduced by
\texttt{quantum\_oop\_\allowbreak simulation.py}, function
\texttt{plot\_combined\_calibration}
(footnote~\ref{fn:repo}).}
\label{fig:combined}
\end{figure}

\paragraph{The empirical scaling constant.}
The proof of Theorem~\ref{thm:completeness} yields the upper bound 
$C(\cO_A) \le 2\sqrt{2} \approx 2.83$ for the single-qubit Pauli 
family. The $B_{\mathrm{bad}}(\delta)$ family of 
Section~\ref{sec:exp4} provides a direct empirical probe of this 
constant. For two unitary channels $A = R_y(\theta)$ and 
$B = R_y(\theta + \delta)$, the diamond-norm distance is 
$\|\cE_A - \cE_B\|_\diamond = 2\sin(\delta/2)$. At $\delta = 0.4$ 
this gives $0.397$, while the worst-case observable deviation 
under the full $\{X, Y, Z\}$ contract is $0.395$ in the ideal 
setting (Table~\ref{tab:exp1}). The ratio 
$\|\cE_A - \cE_B\|_\diamond / \sup_{O \in \cO_A} 
|\langle O\rangle_{\cE_A} - \langle O\rangle_{\cE_B}| \approx 1.0$ 
is a tight empirical lower bound on $C(\cO_A)$, and the analogous 
ratios across the device-noise (FakeBrisbane: $1.01$, FakeFez: 
$0.99$) and real-hardware ($0.82$) columns of Table~\ref{tab:exp6} 
fall in the same range, with the slight reduction on real hardware 
reflecting hardware-noise contributions to the observable estimate. 
The empirically observed constant is therefore well below the 
worst-case theoretical upper bound of $2\sqrt{2}$, confirming that 
the scaling argument of Remark~\ref{rem:dim_scaling} is an upper 
bound rather than a typical value.

\section{Real-Time Engine and Application Domains}
\label{sec:rt}

\paragraph{Architecture.}

The engine, \texttt{qcivet\_realtime.py}, exposes the
\texttt{IntegrityVerifier} class. Each commit performs the
observable check (if any), one SHA-256 hash over the canonical
JSON of the spec prepended with the previous chain head, a
tail-link consistency check, and submission of the new head
to the external anchor. Errors at any step raise
\texttt{IntegrityViolation} with a \texttt{kind} field
(\texttt{hash}, \texttt{observable}, or \texttt{anchor}) so
the host can decide its abort policy.

\paragraph{Performance.}

We measure per-commit latency on commodity hardware (AMD Ryzen
class CPU, single-threaded Python 3.12). Across the three
six-stage demonstration pipelines below, median commit latency
is 0.06 ms and the 99th percentile is below 0.20 ms. End-to-end
overhead for a six-stage pipeline is under 0.5 ms, negligible
next to even the shortest QPU stage. The same engine ran
unmodified during the real-hardware validation on
\texttt{ibm\_fez} (Section~\ref{sec:exp:realqpu}): the
classical hash-chain and observable-deviation checks
contributed under a millisecond per stage, while the cloud
QPU job dominated the total wall time at $137.8$~s
(of which approximately $15$~s was queue waiting and
$122$~s was running time) for $54$ circuits at $4096$ shots
each.

\paragraph{External anchor.}

The default \texttt{ExternalAnchor} writes to an append-only
file, simulating an RFC 3161 timestamping authority or a
Sigstore Rekor instance. Production deployments would replace
this with the corresponding service. The
\texttt{verify\_against\_anchor} method confirms that the
local chain appears as a contiguous block in the anchor log;
a globally consistent rewrite that is not present in the log
is detected here.

\subsection{Failure Modes}

Three classes of attack are caught at distinct points
(Figure~\ref{fig:hash_chain}):
\begin{itemize}[leftmargin=*, itemsep=2pt]
\item Local tampering (an attacker rewrites a spec record in
place) is caught by \texttt{verify\_full\_chain} at end of
pipeline, kind \texttt{hash}.
\item Semantic drift (a quantum stage produces observables
outside the calibrated tolerance) is caught at commit time,
kind \texttt{observable}, with downstream stages never
launched.
\item Globally consistent rewrites (an attacker produces a
fresh, locally valid chain offline) are caught by
\texttt{verify\_against\_anchor}, kind \texttt{anchor}.
\end{itemize}
The three correspond, respectively, to hash-chain integrity,
observable contract preservation, and global tamper evidence.

\begin{figure}[H]
\centering
\includegraphics[width=0.89\linewidth]{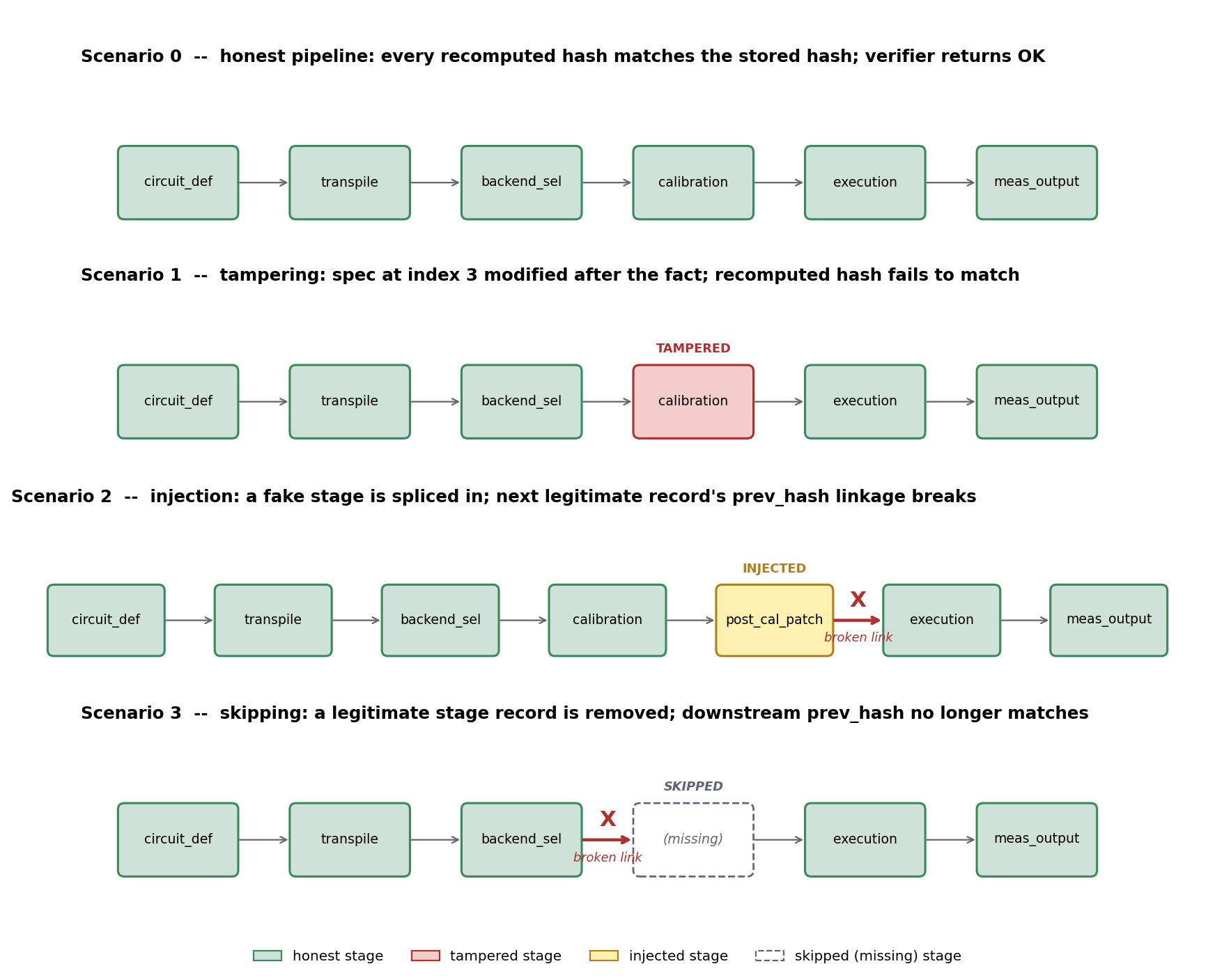}
\caption{The four hash-chain scenarios. Top: honest pipeline,
every recomputed hash matches the stored hash; verifier
returns OK. Second: tampering, the spec at stage~3 is modified
after commit, recomputed hash fails to match. Third:
injection, a fake stage is spliced in, breaking the next
legitimate record's prev\_hash linkage. Bottom: skipping, a
legitimate stage record is removed, downstream prev\_hash no
longer matches. Reproduced by
\texttt{hash\_chain\_visualize.py}; the underlying scenarios
are implemented in \texttt{hash\_chain\_demo.py}
(footnote~\ref{fn:repo}).}
\label{fig:hash_chain}
\end{figure}
\paragraph{Scenario implementation.} The script
\texttt{hash\_chain\_demo.py} provides standalone implementations
of all four scenarios (honest baseline, post-hoc tampering,
record injection, record skipping) on a six-stage hybrid QPU
pipeline. Each scenario constructs the full chain, applies the
attack as described above, and runs \texttt{verify\_chain}
against the resulting record sequence; the failure index and
mismatch reason are reported to standard output. The companion
script \texttt{hash\_chain\_visualize.py} produces
Figure~\ref{fig:hash_chain} by drawing each scenario as a
sequence of stage boxes coloured by status (honest, tampered,
injected, missing) with the broken hash link highlighted.

\label{sec:apps}

\paragraph{Three application domains.}
We instantiate QCIVET in three end-to-end domains, each with a
six-stage hybrid pipeline, an explicit threat model, and four
demonstration scenarios (clean i.e. no-attacks, local tampering, semantic
drift, global rewrite). The three domains span the maturity
spectrum of quantum computing in practice: VQE for drug
discovery is the most production-ready quantum-chemistry
workload; quantum-assisted fraud detection is an emerging
financial application; cloud QPU auditing is a meta-application
in which the customer of a quantum cloud service uses QCIVET
to audit the provider. The mechanism that catches the
sneaky-subtype attack in each of these pipelines is the same
observable-deviation check whose preservation on real hardware
we established in Section~\ref{sec:exp:realqpu}.

\subsection{VQE for Drug Discovery}
\label{sec:apps:vqe}

The Variational Quantum Eigensolver computes molecular
ground-state energies by alternating quantum expectation
estimation with classical parameter optimisation. Major
pharmaceutical companies use VQE-style pipelines for early
binding-energy estimation and drug-target screening. A
miscalibrated energy can mislead a clinical trial, motivating
an audit-evidence requirement under U.S. Food and Drug
Administration (FDA) reproducibility
guidance.

\paragraph{Pipeline.}
\begin{enumerate}[leftmargin=*, itemsep=1pt]
\item \texttt{molecular\_geometry} (classical): atomic
coordinates, basis set, charge, multiplicity.
\item \texttt{active\_space\_selection} (classical): number
of active orbitals, frozen-core flag.
\item \texttt{hamiltonian\_construction} (classical):
fermion-to-qubit encoding, number of Pauli terms.
\item \texttt{ansatz\_synthesis} (classical): ansatz family,
number of parameters, circuit depth.
\item \texttt{vqe\_optimisation} (quantum): backend, shots
per iteration, optimiser, final energy estimate. Carries the
observable $\langle H \rangle$ with reference value
$E_0 = -1.137270174$ Ha (H$_2$ in the STO-3G basis set, which
approximates each Slater-type orbital with three Gaussians)
and tolerance
$\eps = 0.04$ Ha (Heron-class).
\item \texttt{result\_interpretation} (classical):
binding-energy estimate, FDA compliance flag.
\end{enumerate}

\paragraph{Threats.}
A1: an attacker rewrites the active-space record after commit,
expanding the orbital count to push a different (wrong)
energy through the rest of the pipeline. A2: the optimiser
converges to a biased stationary point, and the measured
energy lies outside the calibrated tolerance. A3: the entire
audit trail is re-run offline with a different active-space
spec; only the external anchor catches it.

\paragraph{Outcomes.} All three attacks are caught at the
appropriate scope. The clean baseline produces a chain of six
committed records with total verification latency 0.15~ms.
The demonstrator script runs each attack scenario in turn,
constructs an \texttt{IntegrityVerifier} bound to a
local append-only anchor file, and either commits the six
stages successfully (clean run) or raises
\texttt{IntegrityViolation} at the appropriate point
(tamper, drift, rewrite). The full implementation of this
demonstrator is provided as
\texttt{qcivet\_demo\_vqe.py}, which uses the verification
engine of \texttt{qcivet\_realtime.py}
(footnote~\ref{fn:repo}).

\subsection{Quantum-Assisted Fraud Detection}
\label{sec:apps:fraud}

Quantum kernels for support-vector machines are an active
direction in financial-crime detection. The pipeline computes
fidelity-kernel entries on a QPU, classifies transactions
classically, and triggers alerts above a decision threshold.

\paragraph{Pipeline.} The six stages are:
\begin{enumerate}
\item \texttt{transaction\_ingestion}: ingestion of raw transaction stream;
\item \texttt{feature\_engineering}: feature extraction and normalisation;
\item \texttt{quantum\_kernel\_preparation}: quantum-kernel parameter setup;
\item \texttt{qpu\_kernel\_evaluation} (quantum stage), with the observable ``worst-case kernel-entry deviation'' having reference $0$ and tolerance $\eps = 0.05$;
\item \texttt{classification}: classical classifier;
\item \texttt{alert\_decision}: alert threshold, block action, regulator, audit retention.
\end{enumerate}

\paragraph{Threats.}
A1: an insider rewrites the alert-decision spec after commit,
raising the threshold from $0.65$ to $0.95$ so genuine fraud
no longer trips an alert, regulator-relevant under the
Sarbanes-Oxley Act (SOX). A2:
the QPU returns a kernel matrix whose worst-case entry
deviates beyond the tolerance, indicating a poisoning attempt.
A3: a globally consistent rewrite that swaps the feature set
used in stage~2.

\paragraph{Outcomes.} All three attacks are caught. The
threshold-raise scenario is informative: the streaming engine
signs out cleanly because the tamper happens after the commit;
only the post-pipeline \texttt{verify\_full\_chain} reveals
the mismatch. This is the expected behaviour: hash-chain
replay is the right tool for post-hoc tampering, just as
observable checks are the right tool for in-flight semantic
drift. The demonstrator runs each scenario as a separate
pipeline pass and prints whether the violation was caught at
commit time, by post-pipeline replay, or by anchor check. The
full implementation is provided as
\texttt{qcivet\_demo\_fraud.py}
(footnote~\ref{fn:repo}).

\subsection{Cloud QPU Auditing}
\label{sec:apps:cloud}

The customer of a cloud quantum service submits a hybrid
workload. The provider transpiles, schedules, executes, and
returns results. The customer wants to verify, after the fact,
that the claimed backend was actually used, the claimed
calibration data was in effect, and the result has not been
altered in transit. This is a meta-application: QCIVET is used
by the customer to audit the provider.

\paragraph{Pipeline.}
The six stages are
\texttt{customer\_submission},
\texttt{cloud\_transpilation},
\texttt{backend\_assignment},
\texttt{calibration\_verification},
\texttt{job\_execution} (quantum, tracer-circuit observable
with reference $0$ and tolerance $\eps = 0.05$ Heron-class),
and \texttt{result\_delivery}.

\paragraph{Threats.}
A1 (silent downgrade): the provider claimed a Heron r2 backend
but routed the workload to an Eagle r3; the tracer-circuit
observable exceeds the Heron-calibrated tolerance. A2
(calibration spoof): a stale calibration snapshot is
committed, then silently swapped for an older one. A3
(assignment rewrite): a globally consistent rewrite of the
audit trail with a different backend assignment.

\paragraph{Outcomes.} All three attacks are caught at the
expected scope. Notably, the silent-downgrade attack is
caught at execution time, so the customer can reject the
result before paying for it. The demonstrator implements the
customer-side viewpoint: it commits the six pipeline records,
checks the tracer-circuit observable against a Heron-class
tolerance, and reports a pre-payment abort if the device
behaviour looks Eagle-class instead. The full implementation
is provided as \texttt{qcivet\_demo\_cloud.py}
(footnote~\ref{fn:repo}).

\paragraph{Cross-application patterns.}

Table~\ref{tab:apps} summarises the three domains. Three
patterns are worth noting. First, each domain exercises all
three failure-detection mechanisms (hash replay, observable
check, anchor verification), but their relative importance
differs: VQE leans on observable drift detection; cloud
auditing leans on observable downgrade detection; fraud
detection leans on hash replay for the insider threshold
attack. Second, calibrated tolerances are domain-specific.
The Heron-class $\eps = 0.04$ used in VQE is tighter than the
$\eps = 0.05$ used for the kernel and tracer-circuit
observables, reflecting the tolerance required to disentangle
a 0.04 Ha energy difference from the device noise floor. Third,
all three pipelines have six stages. This is not a coincidence:
hybrid quantum--classical workflows in 2026 cluster naturally
around the classical-prep / quantum-execute /
classical-postprocess split, typically two classical stages on
either side of the quantum core. The framework is, however,
agnostic to stage count.

\begin{table}[t]
\centering
\small
\caption{The three application domains span the maturity
spectrum of quantum computing in production. Each domain is
discussed in detail in the corresponding subsection.}
\label{tab:apps}
\begin{tabular}{@{}p{0.16\linewidth} p{0.26\linewidth} p{0.26\linewidth} p{0.22\linewidth}@{}}
\toprule
\textbf{Domain} & \textbf{Quantum stage} & \textbf{Primary threat} & \textbf{Regulatory frame} \\
\midrule
VQE drug \\discovery (\S\ref{sec:apps:vqe})   & VQE optimisation
                     & Active-space tamper, energy drift
                     & FDA reproducibility \\
Fraud detection (\S\ref{sec:apps:fraud})    & Quantum-kernel evaluation
                     & Insider threshold raise, kernel poisoning
                     & SOX audit trail \\\\
Cloud QPU\\ auditing (\S\ref{sec:apps:cloud}) & Customer-side observable
                     & Silent downgrade, calibration spoof
                     & Cloud security standards \\
\bottomrule
\end{tabular}
\end{table}

\section{Threat Model and Security Analysis}
\label{sec:threat}

\paragraph{Adversary capabilities.}

We assume an adversary $\mathcal{A}$ who can: modify any spec
record after it has been committed (insider with file-system
access, or supply-chain compromise); insert a fabricated
record into the audit trail; delete or omit a stage's record;
manipulate the quantum hardware so that observables drift,
within physical constraints (subject to the noise floor of the
device); rerun the entire pipeline offline with adversarial
parameters and substitute the resulting chain for the
original.

We assume the adversary cannot produce SHA-256 collisions,
cannot forge entries in the external anchor, and cannot
interfere with the verifier's runtime memory during a commit.
Where SHA-256 is not preferred (deployments that pre-emptively
assume Grover-style speedups, for example), substitution with
SHA-3 or a PQC-safe hash is straightforward and changes
nothing else in the framework.

\begin{figure}[h]
\centering
\begin{tikzpicture}[
  font=\sffamily\footnotesize,
  attacker/.style={
    draw, rounded corners=2pt, minimum width=2.2cm, minimum height=0.8cm,
    align=center, fill=red!12, draw=red!60, thick
  },
  attack/.style={
    draw, rounded corners=2pt, minimum width=3.0cm, minimum height=0.7cm,
    align=center, fill=red!5, draw=red!40, thick
  },
  detect/.style={
    draw, rounded corners=2pt, minimum width=3.0cm, minimum height=0.7cm,
    align=center, fill=green!8, draw=green!50, thick
  },
  outcome/.style={
    draw, rounded corners=2pt, minimum width=2.4cm, minimum height=0.6cm,
    align=center, fill=blue!8, draw=blue!50, thick
  },
  arrow/.style={->, >=stealth, thick, draw=gray!70}
]

\node[attacker] (adv) at (0, 0) {Adversary $\mathcal{A}$};

\node[attack] (a1) at (4.0, 1.6)  {Spec tampering /\\fabricated records};
\node[attack] (a2) at (4.0, 0)    {Quantum-stage\\drift beyond $\eps$};
\node[attack] (a3) at (4.0, -1.6) {Globally consistent\\offline rewrite};

\node[detect] (d1) at (8.5, 1.6)  {\texttt{verify\_full\_chain}\\(hash-chain replay)};
\node[detect] (d2) at (8.5, 0)    {Commit-time\\observable check};
\node[detect] (d3) at (8.5, -1.6) {\texttt{verify\_against\_anchor}\\(anchor verification)};

\node[outcome] (o1) at (12.5, 1.6)  {kind: \texttt{hash}};
\node[outcome] (o2) at (12.5, 0)    {kind: \texttt{observable}};
\node[outcome] (o3) at (12.5, -1.6) {kind: \texttt{anchor}};

\draw[arrow] (adv) -- (a1);
\draw[arrow] (adv) -- (a2);
\draw[arrow] (adv) -- (a3);

\draw[arrow] (a1) -- (d1);
\draw[arrow] (a2) -- (d2);
\draw[arrow] (a3) -- (d3);

\draw[arrow] (d1) -- (o1);
\draw[arrow] (d2) -- (o2);
\draw[arrow] (d3) -- (o3);

\node[font=\sffamily\small\bfseries] at (4.0, 2.5) {Attack class};
\node[font=\sffamily\small\bfseries] at (8.5, 2.5) {Detection mechanism};
\node[font=\sffamily\small\bfseries] at (12.5, 2.5) {Violation kind};

\end{tikzpicture}
\caption{QCIVET threat model: three classes of attack, the
detection mechanism that catches each, and the violation kind
reported by the engine. The middle row (observable-deviation
detection) is the only one that depends on quantum-stage
behaviour and is empirically validated on real cloud hardware
in Section~\ref{sec:exp:realqpu}.}
\label{fig:threat_model}
\end{figure}

\paragraph{Detection targets and coverage.}

Figure~\ref{fig:threat_model} summarises the mapping between
attack classes and detection mechanisms. The three
failure-detection mechanisms cover the corresponding attack
classes. Spec tampering, fabricated records, and stage
omission are caught by hash-chain replay
(\texttt{verify\_full\_chain}) at most one round-trip after
the attack. Quantum hardware manipulation that causes
observable drift beyond the calibrated tolerance is caught at
commit time. Globally consistent rewrites are caught by anchor
verification. The observable-deviation detector is not merely
analytical: Section~\ref{sec:exp:realqpu} confirms that the
deviation signature distinguishing valid, invalid, and sneaky
overrides is preserved on real cloud hardware
(\texttt{ibm\_fez}, Heron r2) within the calibrated noise
budget, so the threat-model coverage stated here is grounded
in measurement rather than in simulation alone.

\paragraph{Limitations of the threat model.}

QCIVET does not address: confidentiality of spec records (an
attacker may read the pipeline structure); denial of service
against the quantum backend; attacks that operate within the
calibrated tolerance (an attacker who can manipulate
observables by less than $\eps$ remains undetected by this
mechanism, by construction); or memory-safety vulnerabilities
in the host pipeline. Confidentiality is orthogonal and can be
addressed by encrypting the spec records under a separate key.
The third item implies a tighter calibration regime, limited
in turn by the device noise floor; reducing $\eps$ below the
noise floor would produce false positives.

\paragraph{False-positive considerations.}

A correctly calibrated $\eps$ should produce no false
positives in honest operation. The procedure of Experiment~5
is the operational realisation: run $B_{\mathrm{good}}$ on the
target backend over a representative input set, take the 95th
percentile of the per-input deviation, and set $\eps$ slightly
above it. A deployer who tightens $\eps$ aggressively trades
detection sensitivity against false-positive rate, exactly as
in classical anomaly-detection systems.

\section{Conclusion and Future Work}
\label{sec:conc}

We presented QCIVET, a contract-based integrity-verification
framework for hybrid quantum--classical software pipelines.
QCIVET fills a gap between classical supply-chain integrity
tools, which do not address quantum stages, and
quantum-cryptographic primitives, which assume the workflow
itself is classical. Our contribution is twofold: a
behavioural-subtyping framework whose preservation condition
is operationally measurable on a calibrated noise floor, and
a real-time engine whose per-stage overhead is sub-millisecond.

To the best of our knowledge, the contributions of this work have no precedent in the literature: (i) we operationalise Liskov-Wing behavioural subtyping for quantum channels through a calibration-ready observable-deviation contract that is directly measurable on noisy hardware with a finite shot budget; (ii) we identify and formally characterise the sneaky-subtype family, namely overrides that pass weak, single-Pauli contracts but are exposed by informationally complete ones (Proposition~\ref{prop:sneaky}); (iii) we combine hash-chained syntactic integrity with a semantic quantum check inside a single discipline, with an explicit soundness--completeness link between the two layers (Theorems~\ref{thm:soundness}--\ref{thm:composition}); (iv) we provide an integrity-verification mechanism designed specifically for VQE-based pharmaceutical workflows; and (v) we demonstrate the full subtype-separation protocol end-to-end on a real cloud quantum processor (\texttt{ibm\_fez}, Heron r2), where the predicted sneaky-fingerprint survives intact along the ideal $\rightarrow$ simulated $\rightarrow$ real chain. These contributions are intentionally complementary to recent foundational progress on refinement-based substitutability for quantum programs~\cite{feng_zhou_refinement_orders,feng_zhou_subtyping}: where the refinement framework provides the denotational backbone for what it means for one quantum program to replace another at development time, our work provides the runtime-observable, hardware-evaluable projection of that question for the audit setting on noisy hybrid pipelines.

The framework's three formal guarantees (soundness,
conditional completeness, and compositionality) are matched
by three detection mechanisms in the implementation:
hash-chain replay, observable-deviation halting, and external
anchor verification. The three application
domains (VQE for drug discovery, quantum-assisted fraud
detection, and cloud QPU auditing) show that the same engine
and the same theory apply across very different production
settings, with only the threat model and the calibrated
tolerance changing.

The experimental campaign closes the gap between theory and
deployment. Beyond the noiseless analysis
(Experiments~1--4) and the device-noise validation on
calibrated simulators (Experiments~5--6), we ran the
subtype-separation protocol end-to-end on \texttt{ibm\_fez},
a production Heron r2 processor accessed through the IBM
Quantum cloud, on 54 distinct circuits with 4096 shots each.
The sneaky-override fingerprint predicted by
Proposition~\ref{prop:sneaky} survives intact on real
hardware. The full-contract deviation reads
$1.401 \to 1.386 \to 1.420$ along the ideal $\to$ simulated
$\to$ real chain, with the real-hardware value within 1.4\%
of the noiseless prediction; at the same time the
$\{Z\}$-only deviation stays at the noise floor on every
stage of the chain (0.000 ideal, 0.029 simulated, 0.079 real).
The point of this run is not the precision of the numbers but
the qualitative invariant they expose: a circuit can pass a
weak, single-Pauli contract on a real noisy QPU and still
fail an informationally complete one by more than an order of
magnitude. Contract-based integrity verification is therefore
not a property of exact simulation; it is a property of
operational behaviour observable on a real quantum processor
today.

These results have direct implications for high-stakes
deployments. In pharmaceutical workflows, an unaudited VQE
energy estimate that drifts past tolerance can mislead
preclinical screening; QCIVET binds each stage of the
calculation to a verifiable observable contract and an
auditable hash chain, producing the kind of evidence already
required by reproducibility regimes such as the FDA. In
financial fraud detection, an after-the-fact post-quantum
signature does not detect that a kernel matrix was poisoned
during evaluation; the observable check on the kernel and the
commit-time hash chain do, and they do so before the
downstream classical decision is released. In customer-side
auditing of a cloud QPU service, the customer needs neither
multiple devices nor extra cryptographic gates: a single
tracer observable and a tolerance budget are sufficient to
detect specification tampering, middleware re-routing, and
calibration drift, against an honest-but-curious or
malicious-but-budgeted provider.

\paragraph{Practical positioning.}
Beyond formal novelty, QCIVET is engineered to be the
practical and economical option for production deployments.
Compared with device-fingerprinting and quantum-PUF schemes,
which authenticate the hardware unit but say nothing about
whether the result satisfies a contract, QCIVET delivers an
end-to-end integrity guarantee with no specialised hardware
beyond the QPU the customer is already using. Compared with
distributed-shot protocols that require multiple physical
devices for majority voting (and the corresponding multiple
QPU rentals), QCIVET runs on a single device. Compared with
cryptographic delegation protocols that introduce extra
quantum gates and multi-prover settings, both of which
translate directly into longer queue times and higher
per-job costs on commercial cloud platforms, QCIVET requires
only a tracer observable and a tolerance budget on the
customer side. The cost on the classical layer is a
sub-millisecond per-stage commit, dominated by a single
SHA-256 hash, so the marginal expense of running QCIVET on
top of an existing pipeline is dominated by ordinary CPU
cycles rather than additional quantum shots. The framework
is composable rather than exclusive: deployers concerned
about future quantum cryptanalysis can wrap each anchor
commitment in a post-quantum signature (ML-DSA or SLH-DSA),
and the in-toto, SLSA and Sigstore Rekor tooling already
adopted in classical pipelines can serve unchanged as the
external anchor. From a governance perspective, the
per-stage spec record together with its hash chain provides
a tamper-evident chain of accountability: each parameter
change, each calibration snapshot, and each observable
measurement is bound to a verifiable commit, so an auditor
can attribute any deviation to a specific stage, a specific
operator, and a specific point in time without re-executing
the pipeline. Finally, the artefact QCIVET produces, a
hash-chained record of every spec, every observable
contract, and every commit time, is the kind of evidence
already requested under FDA reproducibility regimes and SOX
audit trails, so adoption is incremental rather than
disruptive.

We see QCIVET as one piece of a larger puzzle. Quantum
software will increasingly drive high-stakes classical
decisions; the software-engineering tools we use to trust
those decisions must catch up.

\paragraph{Multi-qubit and entangled observables.} Our
experiments are on a single qubit. The framework extends to
multi-qubit observables and entangled states without
modification (Theorem~\ref{thm:soundness} is dimension-agnostic),
but the constants in Theorem~\ref{thm:completeness} grow with
the dimension and empirical calibration becomes more involved.
Calibrating multi-qubit informationally complete families on
production-scale circuits is an open problem.

\paragraph{Cloud-provider deployment.} The cloud QPU auditing
demonstration is from the customer's perspective. A
complementary deployment from the provider's
side (attesting calibration snapshots, transpilation
provenance, and run logs into the customer's chain) would
close the loop and is feasible with the existing engine API.

\section*{Author Contributions}

\textbf{Esra Yeniaras (corresponding author):} Conceptualization of the contract-based integrity verification framework, formal methodology, design and proof of all theorems (soundness, conditional completeness, compositionality) and the sneaky-subtype characterisation, software implementation including the verification engine and the three application demonstrators, all experimental scripts (Experiments 1--6), experimental design, real-hardware IBM QPU validation runs on \texttt{ibm\_fez}, writing of the manuscript, and project supervision. \textbf{Muhammad Amin Karimov:} Conceptualization of the initial research direction, literature review for the related work survey, and assistance with the real-hardware IBM QPU validation experiment on \texttt{ibm\_fez}.

\section*{Acknowledgements}
We acknowledge the use of IBM Quantum services~\cite{ibm_open_plan} and the Qiskit open-source software development kit~\cite{qiskit} for the
real-hardware validation experiments reported in
Section~\ref{sec:exp:realqpu}. The views expressed are those
of the authors and do not reflect the official policy or
position of IBM or the IBM Quantum team.

\bibliographystyle{plain}
\bibliography{qcivet}

\end{document}